\documentclass[12pt, onecolumn,a4paper,unpublished]{quantumarticle}
\pdfoutput=1 
\usepackage[T1]{fontenc}
\usepackage{makecell}
\usepackage{tikz}
\usetikzlibrary{positioning, calc}
\usetikzlibrary{arrows}
\usepackage{tikz-3dplot}
\usepackage{graphicx}
\usepackage{mdwlist}
\usepackage{color}
\usepackage{thmtools}
\usepackage{amssymb}
\usepackage{physics}
\usepackage{amsmath}
\usepackage{array}
\usepackage{doi}
\usepackage{url,hyperref}
\usepackage[capitalize,nameinlink]{cleveref}
\hypersetup{colorlinks={true},linkcolor={blue},citecolor=red}

\usepackage[numbers,sort&compress]{natbib}
\usetikzlibrary{fadings}
\usetikzlibrary{decorations.pathmorphing}
\usepackage{mathrsfs}
\usepackage{authblk}
\usepackage[mathscr]{euscript}
\usepackage{enumitem}

\DeclareMathAlphabet\mathbfcal{OMS}{cmsy}{b}{n}

\usetikzlibrary{decorations.pathreplacing}
\tikzset{snake it/.style={decorate, decoration=snake}}

\usetikzlibrary{decorations.pathreplacing,decorations.markings}

\tikzset{
    >=stealth',
    punkt/.style={
           rectangle,
           rounded corners,
           draw=black, very thick,
           text width=6.5em,
           minimum height=2em,
           text centered},
    pil/.style={
           ->,
           thick,
           shorten <=2pt,
           shorten >=2pt,},
  on each segment/.style={
    decorate,
    decoration={
      show path construction,
      moveto code={},
      lineto code={
        \path [#1]
        (\tikzinputsegmentfirst) -- (\tikzinputsegmentlast);
      },
      curveto code={
        \path [#1] (\tikzinputsegmentfirst)
        .. controls
        (\tikzinputsegmentsupporta) and (\tikzinputsegmentsupportb)
        ..
        (\tikzinputsegmentlast);
      },
      closepath code={
        \path [#1]
        (\tikzinputsegmentfirst) -- (\tikzinputsegmentlast);
      },
    },
  },
  mid arrow/.style={postaction={decorate,decoration={
        markings,
        mark=at position .5 with {\arrow[#1]{stealth'}}
      }}}
}

\usepackage{slashed}
\usepackage{float} 
\usepackage{subcaption}

\newtheorem{theorem}{Theorem}

\newtheorem{corollary}[theorem]{Corollary}

\newtheorem{definition}[theorem]{Definition}

\newtheorem{lemma}[theorem]{Lemma}

\newtheorem{remark}[theorem]{Remark}

\newenvironment{proof}[1][Proof]{\noindent\textbf{#1. }}{\ \rule{0.5em}{0.5em}}
\newcommand{\CNOT}{\text{CNOT}}
\newcommand{\SWAP}{\text{SWAP}}

\begin{document}

\author[1]{Richard Cleve}
\email{cleve@uwaterloo.ca}
\orcid{}

\author[2]{Alex May}
\email{amay@perimeterinstitute.ca}
\orcid{0000-0002-4030-5410}

\title{Lower bounds on non-local computation from controllable correlation}

\affiliation[1]{Institute for Quantum Computing and School of Computer Science, University of Waterloo}
\affiliation[2]{Institute for Quantum Computing and Perimeter Institute for Theoretical Physics}

\abstract{Understanding entanglement cost in non-local quantum computation (NLQC) is relevant to complexity, cryptography, gravity, and other areas. 
This entanglement cost is largely uncharacterized; previous lower bound techniques apply to narrowly defined cases, and proving lower bounds on most simple unitaries has remained open.
Here, we give two new lower bound techniques that can be evaluated for any unitary, based on their \emph{controllable correlation} and \emph{controllable entanglement}.
For Haar random two qubit unitaries, our techniques typically lead to non-trivial lower bounds. 
Further, we obtain lower bounds on most of the commonly studied two qubit quantum gates, including CNOT, DCNOT, $\sqrt{\SWAP}$, and the XX interaction, none of which previously had known lower bounds. 
For the CNOT gate, one of our techniques gives a tight lower bound, fully resolving its entanglement cost.
The resulting lower bounds have parallel repetition properties, and apply in the noisy setting.
}

\maketitle

\tableofcontents

\section{Introduction}

Suppose that Alice and Bob begin with system $A$ and $B$ respectively, which they would like to interact under unitary $U_{AB}$. 
To do this, one possibility is for Alice and Bob to carry their systems and meet somewhere, then directly interact their systems. 
Alternatively, they can execute a \emph{non-local quantum computation} (NLQC) protocol, as shown in figure \ref{fig:non-localandlocal}. 
In that setting, Alice and Bob share entanglement, act locally in their own labs, exchange a single simultaneous round of communication, then act locally again. 
In this case, $A$ and $B$ are never brought together, but entanglement and communication allow the simulation of a local interaction between them. 

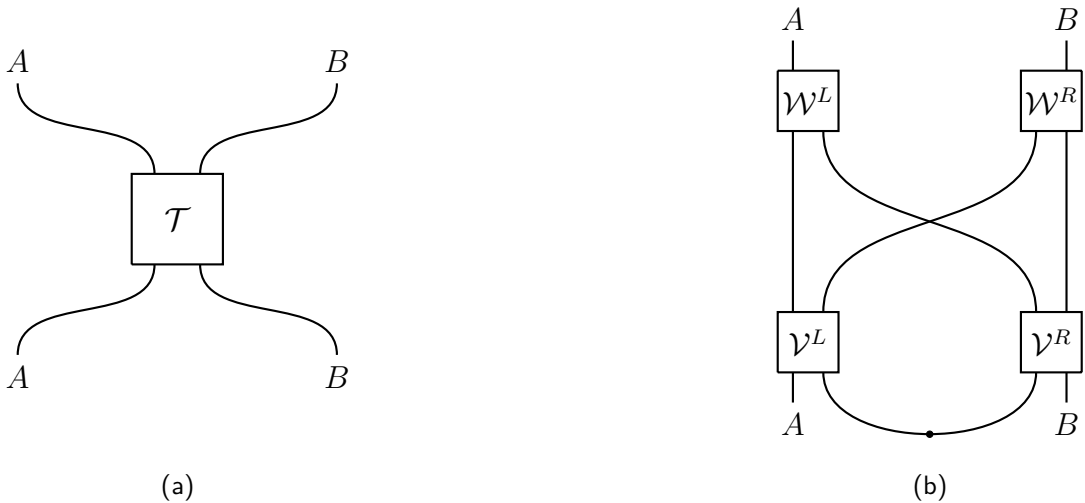
\begin{figure*}[ht]
    \centering
    \begin{subfigure}{0.45\textwidth}
    \centering
    \begin{tikzpicture}[scale=0.6]
    
    \draw[thick] (-1,-1) -- (-1,1) -- (1,1) -- (1,-1) -- (-1,-1);
    
    \draw[thick, mid arrow] (-3.5,-3) to [out=90,in=-90] (-0.5,-1);
    \node[below] at (-3.5,-3) {$A$};
    \draw[thick, mid arrow] (3.5,-3) to [out=90,in=-90] (0.5,-1);
    \node[below] at (3.5,-3) {$B$};
    
    \draw[thick, mid arrow] (0.5,1) to [out=90,in=-90] (3.5,3);
    \node[above] at (3.5,3) {$B$};
    \draw[thick, mid arrow] (-0.5,1) to [out=90,in=-90] (-3.5,3);
    \node[above] at (-3.5,3) {$A$};
    
    \node at (0,0) {$\mathcal{T}$};

    \node at (0,-5) {$ $};
    
    \end{tikzpicture}
    \caption{}
    \label{fig:local}
    \end{subfigure}
    \hfill
    \begin{subfigure}{0.45\textwidth}
    \centering
    \begin{tikzpicture}[scale=0.4]
    
    \draw[thick] (-5,-5) -- (-5,-3) -- (-3,-3) -- (-3,-5) -- (-5,-5);
    \node at (-4,-4) {$\mathcal{V}^L$};
    
    \draw[thick] (5,-5) -- (5,-3) -- (3,-3) -- (3,-5) -- (5,-5);
    \node at (4,-4) {$\mathcal{V}^R$};
    
    \draw[thick] (5,5) -- (5,3) -- (3,3) -- (3,5) -- (5,5);
    \node at (4,4) {$\mathcal{W}^R$};
    
    \draw[thick] (-5,5) -- (-5,3) -- (-3,3) -- (-3,5) -- (-5,5);
    \node at (-4,4) {$\mathcal{W}^L$};
    
    \draw[thick, mid arrow] (-4.5,-3) -- (-4.5,3);
    
    \draw[thick, mid arrow] (4.5,-3) -- (4.5,3);
    
    \draw[thick, mid arrow] (-3.5,-3) to [out=90,in=-90] (3.5,3);
    
    \draw[thick, mid arrow] (3.5,-3) to [out=90,in=-90] (-3.5,3);
    
    \draw[thick] (-3.5,-5) to [out=-90,in=-90] (3.5,-5);
    \draw[black] plot [mark=*, mark size=3] coordinates{(0,-7.05)};
    
    \draw[thick] (-4.5,-6) -- (-4.5,-5);
    \node[below] at (-4.5,-6) {$A$};
    \draw[thick] (4.5,-6) -- (4.5,-5);
    \node[below] at (4.5,-6) {$B$};
    
    \draw[thick] (4.5,5) -- (4.5,6);
    \node[above] at (-4.5,6) {$A$};
    \draw[thick] (-4.5,5) -- (-4.5,6);
    \node[above] at (4.5,6) {$B$};
    
    \end{tikzpicture}
    \caption{}
    \label{fig:non-localcomputation}
    \end{subfigure}
    \caption{Local and non-local computations. a) A unitary $U_{AB}$ is implemented by directly interacting the input systems. b) A non-local quantum computation. The goal is for the action of this circuit on the $AB$ systems to approximate $U_{AB}$.}
    \label{fig:non-localandlocal}
\end{figure*}

Non-local quantum computation has been related to quantum position-verification \cite{kent2011quantum,buhrman2014position}, quantum gravity \cite{may2019quantum,may2020holographic,may2023non,may2022complexity,may2022connected}, complexity theory \cite{buhrman2013garden,cree2022code,speelman2015instantaneous}, classical information-theoretic cryptography \cite{allerstorfer2024relating,asadi2024rank}, Hamiltonian complexity \cite{apel2024security}, uncloneable cryptography \cite{ananth2024unclonable}, and communication complexity \cite{buhrman2013garden,girish2025magic,asadi2025linear}. 
In all of these applications, understanding the entanglement cost of implementing a NLQC is the key relevant technical question. 
For instance, entanglement lower bounds are needed to establish security of quantum position-verification schemes and uncloneable secret sharing, lead to lower bounds on complexity, and lead to constraints on primitives studied in classical information-theoretic cryptography.
Despite the many applications of understanding entanglement cost in NLQC, doing so remains largely open. 

In this work, we give two new, related, lower bound techniques for NLQC. 
While previous techniques can be applied only to specific NLQC instances, the techniques here give lower bounds in terms of generic quantities that can be evaluated for any chosen unitary. 
This leads to lower bounds in many cases of interest. 

\subsection{Prior work}

NLQC was first studied in the context of quantum position-verification \cite{kent2011quantum,buhrman2014position}. 
In that context, NLQC upper bounds provide attacks on position-verification schemes, while lower bounds provide security guarantees. 
It was shown in \cite{buhrman2014position} that all unitaries can be implemented as an NLQC, although the protocol used $2^{2^{O(n)}}$ entanglement.
This was later reduced to $2^{O(n)}$ \cite{beigi2011simplified}. 
Another upper bound strategy is based on a Clifford+T decomposition of the unitary \cite{speelman2015instantaneous}. 

A popular class of NLQCs mix quantum and classical inputs. 
For instance, in $f$-routing \cite{kent2011quantum, buhrman2013garden}, Alice's input consists of a single qubit $Q$ along with a string $x\in\{0,1\}^n$, and Bob's input consists of a string $y\in\{0,1\}^n$. 
The players agree on a choice of Boolean function $f(x,y)$. 
The goal is for $Q$ to be held on Alice's side if $f(x,y)=0$, and brought to Bob's side if $f(x,y)=1$. 
For $f$-routing, upper bounds based on the complexity of the function $f$ have been developed \cite{buhrman2013garden, cree2022code}. 
In particular, the strongest such bound is based on the minimal size of a span program computing $f$ \cite{cree2022code}.  
It was later understood that $f$-routing is equivalent to other well studied NLQC protocols, so that these upper bounds can be extended to those settings \cite{bluhm2025complexity}.

Lower bounds on NLQC have proven difficult to develop, with the strongest (unconditional) lower bounds being linear in the input size. 
Further, even such linear lower bounds have only been proven in scattered cases. 
For a hidden basis task, a proof that entanglement is necessary was given in \cite{buhrman2014position}.
A parallel repetition for the same task was proven in \cite{tomamichel2013monogamy}; in particular they showed that repeating this task $n$ times leads to a success probability like $\beta^n$, $\beta<1$ when not using entanglement. 
This leads to a linear lower bound on entanglement cost via standard arguments, given explicitly in \cite{may2019quantum, may2020holographic}.
This bound continues to hold even in the approximate setting \cite{may2022connected}. 
For $f$-routing \cite{buhrman2013garden}, a lower bound on the dimension of the ancilla was given in \cite{bluhm2021position}, for random choices of function and in a purified model. 
In \cite{kon2025quantum}, this is improved on by, for instance, allowing general channels rather than unitaries in the attack model. 
Parallel repetition for $f$-routing and $f$-measure was established in \cite{escola2025quantum}.
In \cite{asadi2024rank} an entanglement lower bound linear in the classical input size was proven for some explicit, simple, choices of function $f$, although this relies on an assumption that the protocol is perfect in either $f(x,y)=0$ or $f(x,y)=1$ instances. 
Another approach instead considers the complexity of operations needed in the NLQC, which was lower bounded in \cite{asadi2025linear}. 

Regarding bounds that are tight when including numerical factors, \cite{ribeiro2015tight} gives a lower bound of $n-O(\log (n))$ for a protocol with an upper bound of $n$, although their lower bound is only against attackers using classical communication.\footnote{We consider the stronger attack model allowing quantum communication, which gives a stronger security notion for QPV, and is the relevant model in several of the applications of NLQC to other areas.}
Also in the classical communication model \cite{allerstorfer2022role} discusses a lower bound on a Bell state distinguishing game, which we believe has a 1 EPR pair lower bound matching the upper bound using their techniques, although this is not made explicit. 

In the related setting where the communication between Alice and Bob is required to be classical, the ability of a gate to create entanglement is a lower bound on its entanglement cost \cite{gonzales2019bounds}.
Our bounds are similar in spirit to this idea, but apply in the case where quantum communication is allowed. 

For the $\CNOT$ gate specifically, it is straightforward to give an upper bound of $E_f\leq 1$ on the entanglement cost. 
This uses a standard technique for implementing Clifford unitaries using Bell basis teleportation, see e.g. \cite{chakraborty2015practical}.
In \cite{may2022complexity}, a matching lower bound of 1 EPR pair is given in a restricted setting where only stabilizer resource states and Clifford operations are allowed in the protocol. 

\subsection{Our results}


In this work we introduce two related techniques for lower bounding entanglement cost in NLQC. 
We briefly describe the results obtained using the two techniques below:
\begin{enumerate}
    \item \textbf{Controllable correlation:} This technique gives lower bounds on most common gates and bounds for random two qubit unitaries with high probability. In practice, we find that the lower bound is at most $n_A/2$ for $n_A$ the number of input qubits, and we can prove it is at most $n_A$. This lower bound always satisfies parallel repetition.
    \item \textbf{Controllable entanglement:} This technique applies in a more limited set of cases, but can be as large as $n_A$. For the CNOT gate, it gives a tight lower bound of $E_f\geq 1$ where $E_f$ is the entanglement of formation of the resource state. This lower bound satisfies parallel repetition in some cases, including for the CNOT gate.
\end{enumerate}
The lower bounds we obtain for concrete unitaries are shown in the following table.
\begin{figure}[ht]
\centering
\begin{tabular}{| p{3.59cm} | p{2.75cm} | p{2.75cm} | p{2.10cm} |}
\hline
\textbf{Gate} & \textbf{Lower bound from CE} & \textbf{Lower bound from CC} & \textbf{Ref.\ state for CC}\\ 
\hline
CNOT & 1 & 0.5 & $\rho_{cc}$ or $\Psi^+$\\  
\hline
DCNOT & 0 & 0.5 & $\rho_{cc}$ or $\Psi^+$\\
\hline
Berkeley B & 0.601 &  0.5 & $\rho_{cc}$ \\
\hline
$\exp(-i \frac{\pi}{4}X\otimes X)$ & 1 & 0.5 & $\rho_{cc}$ or $\Psi^+$ \\
\hline
i\SWAP & 0 & 0.5 & $\rho_{cc}$ or $\Psi^+$ \\
\hline
$\sqrt{\SWAP}$ & 0 & 0.30 & $\Psi^+$ \\
\hline
Sycamore & 0 & 0.48 & $\rho_{cc}$ or $\Psi^+$ \\
\hline
Magic & 0 & 0.5 & $\rho_{cc}$ or $\Psi^+$\\
\hline
Dagwood Bumstead & 0 & 0.08& $\Psi^+$ \\
\hline
CS & 0 & 0.30 & $\Psi^+$ \\
\hline
CT & 0 & 0.12 & $\Psi^+$ \\
\hline
ECR & 0 & 0.5 & $\Psi^+$ \\
\hline
CSX & 0 & 0.30 & $\Psi^+$ \\
\hline
Random unitary & 0 & $\langle (\lambda_1-\lambda_2)/2\rangle \approx 0.230$ & $\rho_{cc}$ or $\Psi^+$ \\
\hline
\end{tabular}
\caption{Results of a numerical optimization computing the controllable entanglement and controllable correlation for some simple gates; these values are lower bounds on the entanglement of formation in any resource state that suffices to complete the corresponding gate as an NLQC. The reference state is the choice of state on $QA$ used in deriving the lower bound; see figure \ref{fig:NLQC_correlation}. $\rho_{cc}$ refers to the maximally classically correlated pair of qubits, while $\Psi^+$ is a Bell state. Matrix expressions for the listed gates are in appendix \ref{sec:gates}. }\label{table:lowerbounds}
\end{figure}

Note that none of these gates had known lower bounds prior to this work. 
Our lower bounds can be applied broadly in that, given a unitary, there is a straightforward calculation one can perform and, when the calculation returns a positive value, we immediately have a lower bound on the entanglement of any resource state that suffices to implement the gate as a NLQC. 
This contrasts with previous techniques which applied to restrictive subsets of cases, and where each case is treated with separate arguments.

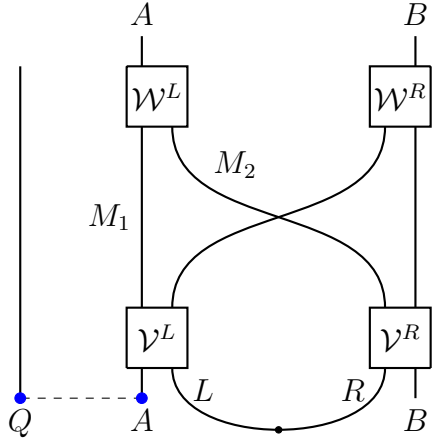
\begin{figure*}
    \centering
    \begin{tikzpicture}[scale=0.4]
    
    \draw[thick] (-5,-5) -- (-5,-3) -- (-3,-3) -- (-3,-5) -- (-5,-5);
    \node at (-4,-4) {$\mathcal{V}^L$};
    
    \draw[thick] (5,-5) -- (5,-3) -- (3,-3) -- (3,-5) -- (5,-5);
    \node at (4,-4) {$\mathcal{V}^R$};
    
    \draw[thick] (5,5) -- (5,3) -- (3,3) -- (3,5) -- (5,5);
    \node at (4,4) {$\mathcal{W}^R$};
    
    \draw[thick] (-5,5) -- (-5,3) -- (-3,3) -- (-3,5) -- (-5,5);
    \node at (-4,4) {$\mathcal{W}^L$};
    
    \draw[thick] (-4.5,-3) -- (-4.5,3);
    
    \draw[thick] (4.5,-3) -- (4.5,3);
    
    \draw[thick] (-3.5,-3) to [out=90,in=-90] (3.5,3);
    
    \draw[thick] (3.5,-3) to [out=90,in=-90] (-3.5,3);
    
    \draw[thick] (-3.5,-5) to [out=-90,in=-90] (3.5,-5);
    \draw[black] plot [mark=*, mark size=3] coordinates{(0,-7.05)};
    \node[below right] at (-3.2,-5) {$L$};
    \node[below left] at (3.2,-5) {$R$};
    
    \draw[thick] (-4.5,-6) -- (-4.5,-5);
    \node[below] at (-4.5,-6) {$A$};
    \draw[thick] (4.5,-6) -- (4.5,-5);
    \node[below] at (4.5,-6) {$B$};
    
    \draw[thick] (4.5,5) -- (4.5,6);
    \draw[thick] (-4.5,5) -- (-4.5,6);
    \node[above] at (-4.5,6) {$A$};
    \node[above] at (4.5,6) {$B$};

    \node[left] at (-4.5,0) {$M_1$};
    \node[right] at (-2.5,1.75) {$M_2$};

    \draw[thick] (-8.5,-6) -- (-8.5,5);
    \draw[dashed] (-4.5,-6) -- (-8.5,-6);
    \draw[blue] plot [mark=*, mark size=5] coordinates{(-4.5,-6)};
    \draw[blue] plot [mark=*, mark size=5] coordinates{(-8.5,-6)};
    \node[below] at (-8.5,-6) {$Q$};
    
    \end{tikzpicture}
    \caption{A non-local quantum computation implementing a unitary $U_{AB}$. To prove lower bounds on the entanglement cost, we consider placing the input system $A$ in a state $P_{QA}$ correlated with a reference system $Q$. We indicate this with the dashed line. The state $P_{QA}$ need not be pure. We find that if adjusting the input on $B$ changes the amount of correlation between $A$ and $Q$ in the final state, that $L:R$ must be entangled.}
    \label{fig:NLQC_correlation}
\end{figure*}

The starting point for the lower bounds is the set-up shown in figure \ref{fig:NLQC_correlation}. 
We consider the two input wires to the quantum operation of interest. 
Consider a correlated state $P_{QA}$. 
We refer to $Q$ as the reference system, $A$ as the input, and $B$ as the control.  
Then, we are interested in the total correlation or entanglement between $Q$ and $A$ after we run the circuit. 
If the total correlation or entanglement in this state can be varied by adjusting the input to the control wire $B$, then we obtain a lower bound.

In more detail, in the controllable correlation technique we are interested in a unitary $U_{AB}$ and how much we can vary the total correlation in $Q\!:\!A$ by adjusting $B$. 
Take any choice of initial correlated state $P_{QA}$.
Then, defining
\begin{align}
    \lambda_1 &= \max_{\phi^1_B} I(Q:A)_{U_{AB}(P_{QA}\otimes \phi^1_B)U_{AB}^\dagger}, \nonumber \\
    \lambda_2 &= \min_{\phi^2_B} I(Q:A)_{U_{AB}(P_{QA}\otimes \phi^2_B)U_{AB}^\dagger},
\end{align}
we find that
\begin{align}
    \boxed{E_f(L:R)_{\Psi} \geq \frac{\lambda_1-\lambda_2}{2}}
\end{align}
where $E_f(L:R)_{\Psi}$ is the entanglement of formation in the distributed resource system $\Psi_{LR}$ used in the NLQC protocol. 

In the controllable entanglement technique we are again interested in a unitary $U_{AB}$. 
We define
\begin{align}
    \lambda := \max_{\phi_B} E_f(Q:A)_{U_{AB}(\Psi^+_{QA}\otimes \phi_B)U_{AB}^\dagger}\,,\quad \lambda' = \min_{\phi_B} E_f(Q:A)_{U_{AB}(\Psi^+_{QA}\otimes \phi_B)U_{AB}^\dagger}.
\end{align}
Then we say that $U_{AB}$ is $\lambda$-controllably correlated if $\lambda>0, \lambda'=0$. 
For any $\lambda$ controllably correlated unitary, we obtain the lower bound
\begin{align}
    \boxed{E_f(L:R)_\Psi \geq \lambda}.
\end{align} 
Both this and the bound from the controllable correlation can be adapted to bound noisy implementations of the relevant unitary. 
We give the full bounds with error terms in the main text.

Both lower bound techniques have convenient parallel repetition properties. 
For the controllable correlation, if $G$ has lower bound $\Delta\lambda\equiv (\lambda_1-\lambda_2)/2$ then we find $G^{\otimes n}$ has lower bound at least $n\Delta \lambda$. 
For the controllable entanglement, a lower bound of $\lambda$ on a unitary $G$ implies a lower bound of $n\lambda$ on $G^{\otimes n}$. 

An important special case is the CNOT gate. 
For CNOT, the controllable entanglement technique has $\lambda=1$ $\lambda'=0$. 
Thus we obtain a tight lower bound of $E_f(L:R)_\Psi\geq n$ to implement $n$ instances of the CNOT gate in parallel. 

A limitation of our results is that we only bound the entanglement cost of unitary operations; new techniques would be needed to address general quantum channels. 
We comment further on this in the discussion.

\section{Quantum information tools}

We recall some quantum information theory tools that we make use of, and fix our notation. 

\subsection{Quantum states and distances}

We label the dimension of a Hilbert space $\mathcal{H}_A$ by $d_A$, and the (base 2) log dimension by $n_A=\log d_A$.
Throughout this work $\log$ denotes the base 2 logarithm, while $\ln$ denotes the natural logarithm.
When considering entanglement across bipartitions of a quantum state $\ket{\psi}_{AB}$, we refer to entanglement across $A:B$ where the colon indicates the partitioning of the systems. 
We use the notation 
\begin{align}
    \ket{\Psi^+}_{AB}=\frac{1}{\sqrt{2}}\left(\ket{00}_{AB}+\ket{11}_{AB} \right)
\end{align}
for this particular maximally entangled state of two qubits, and the notation
\begin{align}
    \rho_{cc} = \frac{1}{2}\left(\ketbra{00}{00}+\ketbra{11}{11} \right)
\end{align}
for the maximally classically correlated state of two qubits. 

We quantify the distance between quantum states with the one-norm distance, 
\begin{align}
    \Vert\rho-\sigma\Vert_1 = \tr|\rho-\sigma|.
\end{align}
Note that $\Vert\rho-\sigma\Vert_1/2$ is known as the trace distance. 

We quantify the distance between quantum channels using the diamond norm distance. 
\begin{definition} Let $\mathbfcal{N}_{B\rightarrow C}, \mathbfcal{M}_{B\rightarrow C}: \mathcal{L}(\mathcal{H}_B)\rightarrow \mathcal{L}(\mathcal{H}_C)$ be quantum channels. 
The \textbf{diamond norm distance} is defined by 
\begin{align}
    \Vert\mathbfcal{N}_{B\rightarrow C}-\mathbfcal{M}_{B\rightarrow C}\Vert_\diamond = \sup_{d} \max_{\Psi_{A_dB}}\Vert\mathbfcal{N}_{B\rightarrow C}(\Psi_{A_dB}) - \mathbfcal{M}_{B\rightarrow C}(\Psi_{A_dB})\Vert_1
\end{align}
where $\mathcal{H}_{A_d}$ is a $d$ dimensional Hilbert space. 
\end{definition}

We define the fidelity of quantum states by
\begin{align}
    F(\rho,\sigma)=\Vert \sqrt{\rho}\sqrt{\sigma}\Vert^2_1.
\end{align}
This is related to the trace distance by the Fuchs--Van de Graaf inequalities, 
\begin{align}\label{eq:FVdG}
    1-\sqrt{F(\rho,\sigma)}\leq \frac{1}{2}\Vert \rho-\sigma \Vert_1 \leq \sqrt{1-F(\rho,\sigma)}.
\end{align}
We next define an average case notion of the fidelity, which captures how well a quantum channel preserves information on average.
\begin{definition}
The average case fidelity of a quantum channel $\mathcal{N}_A$ is defined as
\begin{align}
    \bar{F}(\mathcal{N}_A) = \int d\psi \, F(\psi_A, \mathcal{N}_A(\psi))
\end{align}
where the integral is over the Haar measure. 
\end{definition}

Finally, we need the following statement. 
\begin{lemma}Given a quantum channel $\mathcal{N}$ acting on a $d$ dimensional space, we have
\begin{align}
    \frac{d+1}{d}\left(1-\bar{F}(\mathcal{N})\right)\leq \frac{1}{2}\Vert \mathcal{N}-\mathcal{I}\Vert_\diamond \leq \sqrt{d(d+1)}\sqrt{1-\bar{F}(\mathcal{N})}.
\end{align}
\end{lemma}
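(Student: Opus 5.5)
The plan is to pass through the entanglement fidelity $F_e(\mathcal{N}) = \bra{\Phi}(\mathrm{id}\otimes\mathcal{N})(\Phi)\ket{\Phi}$, where $\Phi$ is the maximally entangled state on $\mathbb{C}^d\otimes\mathbb{C}^d$. It is linked to the average fidelity by the Horodecki--Nielsen formula
\begin{align}
    \bar F(\mathcal{N}) = \frac{d F_e(\mathcal{N})+1}{d+1},
\end{align}
which I would quote (its proof is a twirling argument over the unitary group). Equivalently,
\begin{align}
    1-F_e = \frac{d+1}{d}\left(1-\bar F\right).
\end{align}
Both inequalities then reduce to comparing $1-F_e$ with the diamond distance.

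\emph{Lower bound.} The diamond norm is a supremum over inputs, so I take the input $\Phi$ with a $d$-dimensional reference. This gives
\begin{align}
    \tfrac12\Vert\mathcal{N}-\mathcal{I}\Vert_\diamond \geq \tfrac12\Vert(\mathrm{id}\otimes\mathcal{N})(\Phi)-\Phi\Vert_1 .
\end{align}
Because $\Phi$ is pure, the trace distance dominates $1-F$: for any state $\rho$ and pure $\psi$, $\tfrac12\Vert\rho-\psi\Vert_1 \geq 1-\bra{\psi}\rho\ket{\psi}$, which follows by taking the projector onto $\psi$ as the test operator. Hence the right-hand side is at least $1-F_e = \frac{d+1}{d}(1-\bar F)$, which is the claimed left inequality. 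This step is tight and uses no square roots.

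\emph{Upper bound.} For each fixed input $\Psi_{RA}$ I would apply the Fuchs--Van de Graaf inequality \eqref{eq:FVdG}. Since $\Psi$ is pure, this gives
\begin{align}
    \tfrac12\Vert(\mathrm{id}\otimes\mathcal{N})(\Psi)-\Psi\Vert_1\leq\sqrt{1-\bra{\Psi}(\mathrm{id}\otimes\mathcal{N})(\Psi)\ket{\Psi}}.
\end{align}
By convexity the supremum in the diamond norm is attained on pure states with reference dimension $d$. Any such state can be written $\ket\Psi=\sqrt{d}\,(\sqrt{\rho}\otimes I)\ket\Phi$ up to a unitary on $R$, where $\rho=\Psi_R$. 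Choosing a Kraus representation $\{K_i\}$ of $\mathcal{N}$, I would bound $1-F(\Psi)$ in terms of $1-F_e$.

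Write $1-F(\Psi)=1-\sum_i|\tr(\rho K_i)|^2$ and $1-F_e=1-\sum_i|\tr K_i|^2/d^2$, and decompose $K_i = c_i I + K_i'$ with $\tr K_i'=0$. The deviations from $1$ are then controlled by $\sum_i \tr(K_i'^\dagger K_i')$. This quantity equals $d(1-F_e)$ up to normalisation, because $\sum_i K_i^\dagger K_i = I$ gives $\sum_i \tr(K_i^\dagger K_i)=d$. The crude bound $|\tr(\rho K_i')|^2\leq \tr(K_i'^\dagger K_i')$ then yields $1-F(\Psi)\lesssim d\,(1-F_e)$, up to cross terms handled by Cauchy--Schwarz. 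This gives
\begin{align}
    \tfrac12\Vert\mathcal{N}-\mathcal{I}\Vert_\diamond\leq\sqrt{d(1-F_e)}=\sqrt{(d+1)(1-\bar F)},
\end{align}
which is stronger than the stated $\sqrt{d(d+1)}\sqrt{1-\bar F}$ by a factor of $\sqrt d$. So there is room to absorb constants in the cross terms.

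The main obstacle is this middle comparison between the worst-case pure-state fidelity and the entanglement fidelity, in particular keeping the cross terms $2\mathrm{Re}(\bar c_i\tr(\rho K_i'))$ under control. The first thing I would try is the slack route: bound the trace distance by $d^2$ times an operator-norm estimate, using $\Vert X\Vert_1\leq d\Vert X\Vert_2$ together with $\Vert(\mathrm{id}\otimes(\mathcal{N}-\mathcal{I}))\Vert_{2\to2}$ controlled by $\sqrt{1-F_e}$. That generous factor $\sqrt{d(d+1)}$ is presumably what the stated bound reflects.
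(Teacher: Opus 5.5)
Your lower bound is correct and is the standard argument. With the paper's squared fidelity, $\bar F=(dF_e+1)/(d+1)$ turns the left inequality into $1-F_e\le\tfrac12\Vert(\mathrm{id}\otimes\mathcal{N})(\Phi)-\Phi\Vert_1$, which your projector test gives. The paper itself gives no proof; it cites Proposition 9 of Wallman--Flammia. Your upper bound, however, has a genuine gap at exactly the step you flag.

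\textbf{The cross terms.} These are not a constant-factor nuisance. Their modulus is first order in the deviation of $\mathcal{N}$ from the identity, while $1-F_e$ is second order. Take $\mathcal{N}=U(\cdot)U^\dagger$ with $U=\mathrm{diag}(e^{i\theta},1,\dots,1)$ and $\rho=\ketbra{0}{0}$. Then $|\bar c\,\tr(\rho K')|\approx\theta(d-1)/d$, while $1-F_e\approx\theta^2(d-1)/d^2$.

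So Cauchy--Schwarz can only give $1-F(\Psi)\le(1-F_e)+2\sqrt{d(1-F_e)}$. That yields $\tfrac12\Vert\mathcal{N}-\mathcal{I}\Vert_\diamond\lesssim\bigl(4(d+1)(1-\bar F)\bigr)^{1/4}$, which does not imply the stated square-root bound when $1-\bar F$ is small. Separately, the estimate $|\tr(\rho K_i')|^2\le\tr(K_i'^\dagger K_i')$ points the wrong way for the term $-\sum_i|\tr(\rho K_i')|^2$; that term should simply be dropped.

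\textbf{The missing idea.} Use trace preservation as an operator identity, not just its trace. Evaluating $\sum_iK_i^\dagger K_i=I$ on $\rho$ gives
\[ 1-\sum_i|\tr(\rho K_i)|^2=\sum_i\bigl(\tr(\rho K_i^\dagger K_i)-|\tr(\rho K_i)|^2\bigr). \]
This is a sum of variances, so it is invariant under $K_i\to K_i-c_iI$. Hence $1-F(\Psi)\le\sum_i\tr(\rho K_i'^\dagger K_i')\le\sum_i\tr(K_i'^\dagger K_i')=d(1-F_e)$, and your sharper bound $\sqrt{(d+1)(1-\bar F)}$ does follow.

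\textbf{A shorter route to the stated constant.} Start from your own parametrization and set $J_{\mathcal{N}}=(\mathrm{id}\otimes\mathcal{N})(\Phi)$. Then $(\mathrm{id}\otimes(\mathcal{N}-\mathcal{I}))(\Psi)=d\,(\sqrt{\rho}\otimes I)(J_{\mathcal{N}}-\Phi)(\sqrt{\rho}\otimes I)$. Its trace norm is at most $d\Vert J_{\mathcal{N}}-\Phi\Vert_1$. Fuchs--van de Graaf on the Choi state bounds this by $2d\sqrt{1-F_e}=2\sqrt{d(d+1)}\sqrt{1-\bar F}$. This is where $\sqrt{d(d+1)}=d\sqrt{(d+1)/d}$ comes from.

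\textbf{The $2\to2$ fallback.} This does not work as stated. $\Vert\mathcal{N}-\mathcal{I}\Vert_{2\to2}$ is not controlled by $\sqrt{1-F_e}$ without a dimension factor. In the example above, the input $\ketbra{0}{1}$ gives $\Vert\mathcal{N}-\mathcal{I}\Vert_{2\to2}\ge|e^{i\theta}-1|$, while $\sqrt{1-F_e}=|e^{i\theta}-1|\sqrt{d-1}/d$. Consequently the intermediate quantity $\tfrac d2\Vert\mathcal{N}-\mathcal{I}\Vert_{2\to2}$ already exceeds $\sqrt{d(d+1)}\sqrt{1-\bar F}$ by a factor $d/(2\sqrt{d-1})$. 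The route cannot reproduce the stated constant for $d\ge 3$.
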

This is proposition 9 in \cite{wallman2014randomized}. 

With these notions and results in hand we can prove the following claim. 

\begin{lemma}\label{lemma:Uextreme} 
    Suppose that $\{\mathcal{N}^i\}_i$ are quantum channels, let $U$ be a unitary, and let $\mathcal{U}(\cdot)=U(\cdot)U^\dagger$ be a unitary channel, all acting on a $d$ dimensional Hilbert space. 
    Then if $\Vert \sum_i p_i \mathcal{N}^i-\mathcal{U}\Vert_\diamond \leq \epsilon$, then we have that $\sum_i p_i \Vert \mathcal{N}^i-\mathcal{U}\Vert_\diamond \leq d\sqrt{2\epsilon}$.
\end{lemma}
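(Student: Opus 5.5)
The plan is to route everything through the average fidelity, where linearity in the channel is available, and to use the Wallman--Flammia inequalities (the lemma quoted just above) to pass back and forth to the diamond norm. First I reduce to the case $U=I$. Define $\mathcal{M}^i=\mathcal{U}^{-1}\circ\mathcal{N}^i$. The diamond norm is unitarily invariant under composition with a unitary channel, so $\Vert\sum_i p_i\mathcal{M}^i-\mathcal{I}\Vert_\diamond=\Vert\sum_ip_i\mathcal{N}^i-\mathcal{U}\Vert_\diamond\le\epsilon$, and likewise $\Vert\mathcal{M}^i-\mathcal{I}\Vert_\diamond=\Vert\mathcal{N}^i-\mathcal{U}\Vert_\diamond$. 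So it suffices to treat $\mathcal{U}=\mathcal{I}$.

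Next I use that the average fidelity is affine in the channel when the reference is the identity. Since $\psi$ is pure, $F(\psi,\mathcal{N}(\psi))=\langle\psi|\mathcal{N}(\psi)|\psi\rangle$, which is linear in $\mathcal{N}$. Integrating gives
\begin{align}
\bar F\Big(\sum_ip_i\mathcal{M}^i\Big)=\sum_ip_i\bar F(\mathcal{M}^i).
\end{align}
Applying the left inequality of the previous lemma to the mixture yields
\begin{align}
\sum_ip_i\big(1-\bar F(\mathcal{M}^i)\big)=1-\bar F\Big(\sum_ip_i\mathcal{M}^i\Big)\le \frac{d}{d+1}\cdot\frac{\epsilon}{2}\le\frac{\epsilon}{2}.
\end{align}

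Finally I apply the right inequality of that lemma to each $\mathcal{M}^i$ separately, giving $\Vert\mathcal{M}^i-\mathcal{I}\Vert_\diamond\le 2\sqrt{d(d+1)}\sqrt{1-\bar F(\mathcal{M}^i)}$. Averaging over $i$ and using concavity of the square root (Jensen) gives
\begin{align}
\sum_ip_i\Vert\mathcal{M}^i-\mathcal{I}\Vert_\diamond\le 2\sqrt{d(d+1)}\sqrt{\tfrac{d}{d+1}\cdot\tfrac{\epsilon}{2}}=2d\sqrt{\epsilon/2}=d\sqrt{2\epsilon}.
\end{align}
The constant comes out exactly as stated, provided the factor $d/(d+1)$ from the left inequality is kept rather than bounded by $1$.

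The step that needs the most care is the linearity of the average fidelity. It holds only because the target channel is the identity and the inputs are pure, which is why the reduction to $U=I$ is done first. I would also check that composing with $\mathcal{U}^{-1}$ preserves both diamond norms. It does, since the diamond norm is invariant under pre- or post-composition with unitary channels, which follows from unitary invariance of the trace norm.
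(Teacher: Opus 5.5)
Your proposal is correct and follows the paper's proof essentially step for step. Both compose with $\mathcal{U}^\dagger$ to reduce to the identity, use linearity of $\bar F$ over pure inputs, apply the lower Wallman--Flammia bound to the mixture and the upper bound to each channel separately, and finish with concavity of the square root, keeping the factor $d/(d+1)$ to land exactly on $d\sqrt{2\epsilon}$.
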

\begin{proof}
    We have by assumption that
    \begin{align}
        \epsilon \geq \left\Vert\sum_i p_i\mathcal{N}^i - \mathcal{U}\right\Vert_\diamond.
    \end{align}
    We would like to bound a similar quantity but with the sum moved outside the diamond norm. 
    To do this, we first relate the above to the average case fidelity, 
    \begin{align}\label{eq:epsilonFbar}
        \epsilon \geq \left\Vert\sum_i p_i\mathcal{N}^i - \mathcal{U}\right\Vert_\diamond = \left\Vert\sum_i p_i\mathcal{U}^\dagger \circ \mathcal{N}^i - \mathcal{I}\right\Vert_\diamond \geq 2\frac{d+1}{d}\left(1-\bar{F}\left(\sum_ip_i\mathcal{U}^\dagger\circ \mathcal{N}^i\right)\right).
    \end{align}
    Now we use that the average case fidelity is linear, in the sense that
    \begin{align}
        \bar{F}\left(\sum_ip_i\mathcal{U}^\dagger\circ \mathcal{N}^i\right) &= \int d\psi \, F\left( \sum_ip_i\mathcal{U}^\dagger\circ \mathcal{N}^i(\psi), \psi\right) \nonumber \\
        &=\sum_i p_i \int d\psi F\left( \mathcal{U}^\dagger\circ \mathcal{N}^i(\psi), \psi\right) \nonumber \\
        &= \sum_i p_i\bar{F}(\mathcal{U}^\dagger \circ \mathcal{N}^i)
    \end{align}
    The first equality used that the integral is over pure states. 
    Returning to equation \eqref{eq:epsilonFbar}, we have now
    \begin{align}
        \epsilon \geq 2\frac{d+1}{d} \left(1-\sum_ip_i\bar{F}\left(\mathcal{U}^\dagger\circ \mathcal{N}^i\right)\right).
    \end{align}
    or equivalently,
    \begin{align}
        \sum_ip_i\bar{F}\left(\mathcal{U}^\dagger\circ \mathcal{N}^i\right) \geq 1-\frac{1}{2}\frac{d}{d+1} \epsilon
    \end{align}
    Now we consider the quantity we want to upper bound, which is $\sum_i p_i\Vert \mathcal{U}^\dagger\circ \mathcal{N}^i-\mathcal{I}\Vert_\diamond$, 
    \begin{align}
        \sum_i p_i\Vert \mathcal{U}^\dagger\circ \mathcal{N}^i-\mathcal{I}\Vert_\diamond \leq 2\sqrt{d(d+1)}\sum_i p_i\sqrt{1-\bar{F}(\mathcal{U}^\dagger\circ \mathcal{N}^i)}
    \end{align}
    Now, use that $f(x)=\sqrt{1-x}$ is concave to move the sum inside the square root, to obtain
    \begin{align}
        \sum_i p_i\Vert \mathcal{U}^\dagger\circ \mathcal{N}^i-\mathcal{I}\Vert_\diamond \leq 2\sqrt{d(d+1)}\sqrt{1-\sum_i p_i\bar{F}(\mathcal{U}^\dagger\circ \mathcal{N}^i)} \leq d\sqrt{2\epsilon}
    \end{align}
    as claimed. 
\end{proof}

We will also make use of the following lemma, which is a simple consequence of the Fuchs--Van de Graaf inequalities along with Uhlmann's theorem. 

\begin{lemma}\label{lemma:traceUhlmann}
    Suppose that $\Vert \sigma_A-\rho_A\Vert_1 \leq \epsilon$, and consider any extension of $\sigma_A$ to the AB Hilbert space, call it $\sigma_{AB}$. Then there exists an extension of $\rho_A$ to the $AB$ Hilbert space, call it $\rho_{AB}$, such that
    \begin{align}
        \Vert \sigma_{AB}-\rho_{AB}\Vert_1\leq 2\sqrt{\epsilon}
    \end{align}
\end{lemma}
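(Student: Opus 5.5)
The plan is to pass from trace distance to fidelity with the Fuchs--Van de Graaf inequalities \eqref{eq:FVdG}, use Uhlmann's theorem to lift high fidelity on $A$ to the same fidelity on $AB$, and then return to trace distance.

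First, from $\Vert \sigma_A-\rho_A\Vert_1\leq\epsilon$ and the left inequality in \eqref{eq:FVdG}, I get $1-\sqrt{F(\rho_A,\sigma_A)}\leq \epsilon/2$. Hence $\sqrt{F(\rho_A,\sigma_A)}\geq 1-\epsilon/2$.

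Second, I purify the given extension $\sigma_{AB}$ to a pure state $\ket{\sigma}_{ABC}$ with a reference system $C$. This is also a purification of $\sigma_A$, with purifying system $BC$. I choose $C$ large enough that $BC$ can also hold a purification of $\rho_A$; for instance, take $d_C\geq d_A$, so that $d_{BC}\geq d_A$. Uhlmann's theorem then gives a purification $\ket{\rho}_{ABC}$ of $\rho_A$ with $|\langle \rho|\sigma\rangle|^2=F(\rho_A,\sigma_A)$. I define $\rho_{AB}=\tr_C\ketbra{\rho}{\rho}$. Since $\tr_B\rho_{AB}=\rho_A$, this is a valid extension of $\rho_A$.

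Third, by monotonicity of fidelity under partial trace, $F(\rho_{AB},\sigma_{AB})\geq F(\ketbra{\rho}{\rho},\ketbra{\sigma}{\sigma})=F(\rho_A,\sigma_A)$. The right inequality in \eqref{eq:FVdG} then gives
\begin{align}
\Vert\sigma_{AB}-\rho_{AB}\Vert_1\leq 2\sqrt{1-F(\rho_A,\sigma_A)}\leq 2\sqrt{1-(1-\epsilon/2)^2}=2\sqrt{\epsilon-\epsilon^2/4}\leq 2\sqrt{\epsilon}.
\end{align}
Squaring the bound from the first step is valid because it only matters when $\epsilon\leq 2$. For $\epsilon\geq 2$, the claim is trivial, since trace norms of differences of states are at most $2\leq 2\sqrt{\epsilon}$.

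The only step needing care is the Uhlmann step: the purifying system $BC$ must be large enough to support a purification of $\rho_A$, and this is handled by enlarging $C$.
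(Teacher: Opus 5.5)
Your proof is correct and follows the paper's argument step for step: Fuchs--Van de Graaf to get a fidelity bound on $A$, Uhlmann's theorem applied to a purification of the given $\sigma_{AB}$, monotonicity of fidelity under tracing out the purifying system, and Fuchs--Van de Graaf again. You are slightly more careful than the paper on two points it glosses over: making the purifying system large enough to hold a purification of $\rho_A$, and the trivial regime $\epsilon\geq 2$ where squaring $\sqrt{F}\geq 1-\epsilon/2$ would fail.
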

\begin{proof}
    Starting with $\Vert \sigma_A-\rho_A\Vert_1 \leq \epsilon$, use Fuchs--Van de Graaf to bound the fidelity from below, 
    \begin{align}
        F(\sigma_A,\rho_A) \geq 1-\epsilon.
    \end{align}
    Now consider any purification of $\sigma_{AB}$ into the $ABX$ Hilbert space, call it $\ket{\psi_\sigma}_{ABX}$. 
    Then by Uhlmann's theorem we have that there exists a state $\ket{\psi_\rho}_{ABX}$ such that
    \begin{align}
        F(\sigma_A,\rho_A)=|\braket{\psi_\sigma}{\psi_\rho}|^2.
    \end{align}
    But then we also have that the fidelity increases under the partial trace, so that
    \begin{align}
        F(\sigma_{AB},\rho_{AB})\geq |\braket{\psi_\sigma}{\psi_\rho}|^2 \geq 1-\epsilon.
    \end{align}
    Here $\rho_{AB}$ is defined by tracing out $X$ from $\ket{\psi_\rho}_{ABX}$. 
    Now we use Fuchs--Van de Graaf again to bound the trace distance between $\sigma_{AB}$ and $\rho_{AB}$, giving
    \begin{align}
        \Vert \sigma_{AB}-\rho_{AB}\Vert_1 \leq 2\sqrt{\epsilon}
    \end{align}
    as needed. 
\end{proof}

\subsection{Entropy and entanglement}\label{sec:entropyreview}

The von Neumann entropy of a state $\rho_A$ is 
\begin{align}
    S(A)_\rho = -\tr\rho_A\log \rho_A.
\end{align}
For two density matrices $\rho,\sigma$ with $\ker\sigma \subseteq \ker \rho$, the relative entropy is defined as
\begin{align}
    D(\rho||\sigma) = \tr(\rho\log \rho-\rho\log \sigma).
\end{align}
The relative entropy is related to the trace distance by Pinsker's inequality.
\begin{lemma}\textbf{(Quantum Pinsker inequality)}
The relative entropy $D(\rho||\sigma)$ and the one-norm $\Vert\rho-\sigma\Vert_1$ are related by
\begin{align}
    \frac{1}{2\ln 2}\Vert \rho-\sigma\Vert_1^2 \leq D(\rho||\sigma).
\end{align}
\end{lemma}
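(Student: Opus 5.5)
The plan is the standard route to quantum Pinsker: reduce to the classical inequality via the data-processing inequality for relative entropy under a two-outcome measurement, then prove the binary classical case by calculus. We may assume $\ker\sigma\subseteq\ker\rho$, since otherwise $D(\rho\Vert\sigma)=+\infty$ and there is nothing to prove.

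\textbf{Step 1 (reduction to a binary measurement).} Let $P$ be the projector onto the nonnegative eigenspace of the Hermitian operator $\rho-\sigma$. Then
\begin{align}
    \Vert\rho-\sigma\Vert_1 = 2\tr\bigl(P(\rho-\sigma)\bigr).
\end{align}
Define the measurement channel $\mathcal{M}(X)=\tr(PX)\ketbra{0}{0}+\tr((I-P)X)\ketbra{1}{1}$, and set $p=\tr(P\rho)$, $q=\tr(P\sigma)$. The output distributions are $(p,1-p)$ and $(q,1-q)$, their classical $\ell_1$ distance is $2(p-q)=\Vert\rho-\sigma\Vert_1$, and data processing gives
\begin{align}
    D(\rho\Vert\sigma)\geq D\bigl((p,1-p)\Vert(q,1-q)\bigr).
\end{align}
I take monotonicity of relative entropy under channels (Lindblad--Uhlmann) as a standard fact, since the paper does not state it.

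\textbf{Step 2 (binary Pinsker).} It remains to show, for $1\geq p\geq q\geq 0$,
\begin{align}
    g(q):=p\ln\frac{p}{q}+(1-p)\ln\frac{1-p}{1-q}-2(p-q)^2\geq 0.
\end{align}
Once this holds, dividing by $\ln 2$ and substituting $(2(p-q))^2=\Vert\rho-\sigma\Vert_1^2$ gives exactly $\frac{1}{2\ln 2}\Vert\rho-\sigma\Vert_1^2\leq D(\rho\Vert\sigma)$. To prove it, fix $p$ and view $g$ as a function of $q\in(0,p]$. We have $g(p)=0$, and
\begin{align}
    g'(q)=-\frac{p}{q}+\frac{1-p}{1-q}+4(p-q)=(q-p)\left(\frac{1}{q(1-q)}-4\right).
\end{align}
Since $q(1-q)\leq 1/4$, the bracket is nonnegative. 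Because $q\leq p$, this gives $g'(q)\leq 0$, so $g$ is nonincreasing on $(0,p]$ and therefore $g(q)\geq g(p)=0$.

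The endpoint cases need separate treatment. If $q=0$ with $p>0$, the divergence is infinite and the inequality is trivial. If $p=q$, both sides vanish. The case $q=1$ forces $p=1$, which again gives $p=q$.

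\textbf{Main obstacle.} The main technical point is Step 1: choosing the Helstrom projector so that the measurement preserves the trace distance exactly, and justifying data processing. The rest is routine calculus. One should also note that the classical $\ell_1$ distance between the two output distributions, $2(p-q)$, equals the quantum one-norm, which is what makes the constants match.
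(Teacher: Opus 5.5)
Your proof is correct. There is no proof in the paper to compare it with: the paper states this lemma in its preliminaries as a known fact, without proof or citation, and it is not used in any later argument.

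Your argument is the standard textbook one. You measure with the Helstrom projector $P$ onto the nonnegative part of $\rho-\sigma$, which preserves the one-norm exactly because $\tr(\rho-\sigma)=0$. You then apply monotonicity of relative entropy to this two-outcome channel. Finally you prove the binary inequality $p\ln\frac{p}{q}+(1-p)\ln\frac{1-p}{1-q}\geq 2(p-q)^2$ from $g'(q)=(q-p)\bigl(\frac{1}{q(1-q)}-4\bigr)\leq 0$ and $g(p)=0$.

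The constants are right. Dividing by $\ln 2$ and substituting $\Vert\rho-\sigma\Vert_1=2(p-q)$ gives exactly the factor $\frac{1}{2\ln 2}$ for the base-2 relative entropy the paper uses.

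The only deep ingredient you take as given is monotonicity of relative entropy under channels. This is essentially equivalent to strong subadditivity, which also underlies the data-processing inequalities for mutual information that the paper assumes. So your derivation is self-contained relative to the paper's own toolkit.

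One minor remark: under $\ker\sigma\subseteq\ker\rho$ the case $q=0<p$ cannot actually arise. If $\tr(P\sigma)=0$, the range of $P$ lies in $\ker\sigma\subseteq\ker\rho$, which forces $p=0$. Your separate treatment of that case is therefore harmless but unnecessary.
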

The mutual information is defined by
\begin{align}
    I(A:B)_\rho = S(A)_\rho+S(B)_\rho - S(AB)_\rho.
\end{align}
The mutual information satisfies the following continuity property \cite{winter2016tight}. 
\begin{lemma}\label{lemma:MIcontinuity} Suppose that $\Vert\rho-\sigma\Vert_1=\epsilon$. 
Then
\begin{align}
    |I(A:B)_\rho-I(A:B)_\sigma| \leq 4n_A\epsilon  + (1+2\epsilon)h\left( \frac{2\epsilon}{1+2\epsilon}\right).
\end{align}
\end{lemma}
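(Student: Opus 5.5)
The plan is to write the mutual information as a combination of conditional entropies and then bound each term with the Alicki--Fannes--Winter continuity bound for conditional entropy. Write
\begin{align}
    I(A:B)_\rho = S(A)_\rho - S(A|B)_\rho ,
\end{align}
and similarly for $\sigma$. Then the triangle inequality gives
\begin{align}
    |I(A:B)_\rho-I(A:B)_\sigma| \leq |S(A)_\rho-S(A)_\sigma| + |S(A|B)_\rho - S(A|B)_\sigma|.
\end{align}
Both terms will be controlled in terms of $n_A$ only. This choice of decomposition is what makes the bound independent of the dimension of $B$.

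For the conditional entropy term, I would invoke the tight continuity bound of \cite{winter2016tight}. In the normalization where $\frac12\Vert\rho-\sigma\Vert_1\leq \delta$, it reads
\begin{align}
    |S(A|B)_\rho-S(A|B)_\sigma|\leq 2\delta\log d_A + (1+\delta)h\!\left(\tfrac{\delta}{1+\delta}\right).
\end{align}
For the marginal term, I would use monotonicity of the trace norm under partial trace, $\Vert\rho_A-\sigma_A\Vert_1\leq\epsilon$. This lets me apply the Fannes--Audenaert inequality, or the same conditional-entropy bound with trivial $B$, to get a contribution of order $\delta\log d_A + h(\delta)$.

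The main step, and the likely obstacle, is only the bookkeeping of constants. The statement uses $\epsilon=\Vert\rho-\sigma\Vert_1$ (not the trace distance) together with the term $(1+2\epsilon)h(2\epsilon/(1+2\epsilon))$. This suggests substituting $\delta\to\epsilon$ loosely, or using the weaker form with $2\epsilon$ in place of $\delta$. I would then combine the two $\log d_A$ contributions into $4n_A\epsilon$. I also need the binary-entropy piece from the marginal term to be absorbed. For that I would use the monotonicity of $x\mapsto(1+x)h(x/(1+x))$ and the fact that the Fannes term is dominated by it, or else absorb it into a slightly slack coefficient. Since $h\le 1$ and the prefactor is generous, I expect the stated bound to follow once everything is expressed with the $2\epsilon$ argument.
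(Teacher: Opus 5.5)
Your strategy of writing $I(A:B)=S(A)-S(A|B)$ and applying the Alicki--Fannes--Winter (AFW) bound is the standard route; the paper itself gives no argument and just cites \cite{winter2016tight}. However, the constant bookkeeping, which is the whole content of this lemma, does not close in either of the ways you suggest. Put $g(x)=(1+x)h(x/(1+x))=(1+x)\log(1+x)-x\log x$, so the target is $4n_A\epsilon+g(2\epsilon)$.

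\begin{itemize}
\item Substituting $2\epsilon$ for $\delta$ makes the conditional-entropy term alone equal to $4n_A\epsilon+g(2\epsilon)$. That leaves no room for $|S(A)_\rho-S(A)_\sigma|$.
\item Substituting $\epsilon$ for $\delta$ in both AFW applications gives $4n_A\epsilon+2g(\epsilon)$. This exceeds the target, because $g$ is strictly concave with $g(0)=0$, so $2g(\epsilon)>g(2\epsilon)$.
\item You cannot freely plug an overestimate of the trace distance into Fannes--Audenaert. The function $T\log(d-1)+h(T)$ is not monotone in $T$, and $h(\epsilon)$ is undefined for $\epsilon>1$, while $\Vert\rho-\sigma\Vert_1$ can reach $2$.
\item The remark that $h\le 1$ does not help at small $\epsilon$. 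There $h$ grows like $\epsilon\log(1/\epsilon)$, which beats any linear slack.
\end{itemize}

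The fix is to use the true trace distance $\delta=\epsilon/2$.
\begin{itemize}
\item AFW gives $|S(A|B)_\rho-S(A|B)_\sigma|\le n_A\epsilon+g(\epsilon/2)$.
\item AFW with trivial $B$ gives the same bound for $|S(A)_\rho-S(A)_\sigma|$. This uses $\frac{1}{2}\Vert\rho_A-\sigma_A\Vert_1\le\epsilon/2$ and the fact that the AFW right-hand side is increasing in $\delta$.
\item So the difference of mutual informations is at most $2n_A\epsilon+2g(\epsilon/2)$.
\item Next, $2g(x)\le g(2x)+2x$. This is equivalent to $2\phi(1+x)\le\phi(1+2x)$ for the convex function $\phi(t)=t\log t$ with $\phi(1)=0$.
\item Hence $2g(\epsilon/2)\le g(\epsilon)+\epsilon\le g(2\epsilon)+\epsilon$, using that $g$ is increasing.
\end{itemize}
For $d_A\ge 2$ (so $n_A\ge1$) the total is at most $3n_A\epsilon+g(2\epsilon)$, which implies the stated bound. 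For $d_A=1$ both mutual informations vanish.
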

We also make use of the conditional quantum mutual information, 
\begin{align}
    I(A:B|C)_\rho = S(AC)_\rho + S(BC)_\rho - S(C)_\rho - S(ABC)_\rho.
\end{align}
The mutual and conditional mutual informations are related by the chain rule, 
\begin{align}
    I(A:BC)_\rho = I(A:B|C)_\rho+ I(A:C)_\rho
\end{align}
We have the following statement about the conditional mutual information.
\begin{lemma}\label{lemma:dataprocessing} The quantum conditional mutual information satisfies the data processing inequality, 
    \begin{align}
    I(A:B|C)_\rho \geq I(A:B|C)_{\mathcal{N}_{B}(\rho)}.
\end{align}
\end{lemma}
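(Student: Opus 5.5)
The plan is to reduce the data processing inequality for the conditional mutual information to strong subadditivity, via the chain rule stated just above. A general channel $\mathcal{N}_B$ acting on $B$ can be realized through a Stinespring dilation: there is an isometry $V_{B\to B'E}$ with $\mathcal{N}_B(\cdot)=\tr_E\left(V(\cdot)V^\dagger\right)$. The argument therefore has two steps: invariance of $I(A:B|C)$ under the isometry, then monotonicity under the partial trace over $E$.

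\textbf{Step 1 (invariance under the isometry).} Write $\rho'_{AB'EC}=V\rho_{ABC}V^\dagger$. An isometry applied to one subsystem preserves the nonzero spectrum of every reduced state containing that subsystem. It also leaves reduced states not containing it unchanged. Hence $S(ABC)_\rho=S(AB'EC)_{\rho'}$, $S(BC)_\rho=S(B'EC)_{\rho'}$, $S(AC)_\rho=S(AC)_{\rho'}$, and $S(C)_\rho=S(C)_{\rho'}$. It follows that
\begin{align}
I(A:B|C)_\rho = I(A:B'E|C)_{\rho'}.
\end{align}

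\textbf{Step 2 (discarding $E$).} Apply the conditional form of the chain rule:
\begin{align}
I(A:B'E|C)_{\rho'} = I(A:B'|C)_{\rho'} + I(A:E|B'C)_{\rho'}.
\end{align}
This follows directly by expanding both sides in entropies, in the same way as the unconditional chain rule quoted above. Since $\rho'_{AB'C}=\mathcal{N}_B(\rho)_{AB'C}$, it suffices to show $I(A:E|B'C)_{\rho'}\geq 0$.

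This nonnegativity is exactly strong subadditivity, $S(AX)+S(EX)\geq S(X)+S(AEX)$ with $X=B'C$. It is the one substantive ingredient of the proof. I would not reprove it; I would cite Lieb--Ruskai. If a self-contained route is preferred, it can be derived from monotonicity of relative entropy under partial trace, applied to
\begin{align}
I(A:E|X)=D(\rho_{AEX}\Vert \rho_{AX}\otimes \tfrac{\mathbb{1}_E}{d_E})-D(\rho_{EX}\Vert \rho_{X}\otimes \tfrac{\mathbb{1}_E}{d_E}).
\end{align}
Here the second state is obtained from the first by tracing out $A$, so the difference is nonnegative. This step carries all the difficulty; everything else is bookkeeping with entropies. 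Combining Steps 1 and 2 gives $I(A:B|C)_\rho\geq I(A:B'|C)_{\mathcal{N}_B(\rho)}$, which is the claim.
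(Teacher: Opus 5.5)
Your proof is correct, but it takes a different route from the paper. The paper never dilates the channel. It writes $I(A:B|C)=I(A:BC)-I(A:C)$ and applies the data processing inequality for the ordinary mutual information to $I(A:BC)$, treating $\mathcal{N}_B\otimes\mathrm{id}_C$ as a local channel on the $BC$ side. It then notes that $I(A:C)$ is unchanged because $\mathcal{N}_B$ does not act on $AC$.

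You instead pass to a Stinespring isometry and show that $I(A:B|C)$ is invariant under it. You then use the conditional chain rule $I(A:B'E|C)=I(A:B'|C)+I(A:E|B'C)$ to reduce everything to $I(A:E|B'C)\geq 0$, which is strong subadditivity.

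Both arguments rest on the same ingredient, strong subadditivity (equivalently, monotonicity of relative entropy). The usual proof of data processing for the unconditional mutual information is essentially your Steps 1 and 2 with $C$ trivial.

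What each buys:
\begin{itemize}
\item The paper's version is shorter because it treats the unconditional inequality as a black box.
\item Yours is self-contained and works directly at the conditional level.
\item Yours also identifies the exact loss as the discarded environment term $I(A:E|B'C)$, so equality holds precisely when that term vanishes.
\end{itemize}

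Your entropy bookkeeping in Step 1, the conditional chain rule, and the relative-entropy expression for $I(A:E|X)$ all check out.
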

\begin{proof}
This is more commonly stated for the mutual information (corresponding to $C=\varnothing$) but the statement for the conditional mutual information follows immediately, 
\begin{align}
    I(A:B|C)_\rho &= I(A:BC)_\rho - I(A:C)_\rho \nonumber \\
    &\geq I(A:BC)_{\mathcal{N}_B(\rho)} - I(A:C)_\rho \nonumber \\
    &=I(A:BC)_{\mathcal{N}_B(\rho)} - I(A:C)_{\mathcal{N}_B(\rho)} \nonumber \\
    &=I(A:B|C)_{\mathcal{N}_B(\rho)}.
\end{align}
We use the chain rule in the first and last lines, data processing for the mutual information in the second line, the third line is trivial, and the chain rule again in the last line.
\end{proof}

We next introduce the entanglement of formation \cite{hill1997entanglement,wootters1998entanglement} as a tool for quantifying entanglement. 
\begin{definition}
    The \textbf{entanglement of formation} is defined as
\begin{align}
    E_f(A:B)_\rho = \min_{\{p_i,\ket{\psi_i}\}} \sum_i p_i S(B)_{\psi_i},
\end{align}
where the minimization is over ensembles $\{p_i,\ket{\psi_i}\}$ such that $\rho=\sum_i p_i\ketbra{\psi_i}{\psi_i}$. 
\end{definition}
Note that an equivalent definition would replace $S(B)_{\psi_i}$ with $S(A)_{\psi_i}$. 
The entanglement of formation is a faithful measure of entanglement, meaning that it is zero if and only if $\rho$ is separable. 
This is easy to see from its definition: if it is separable so that 
\begin{align}
    \rho_{AB} = \sum_i p_i \rho_A^i\otimes \rho^i_B
\end{align}
then we introduce decompositions $\rho_A^i=\sum_k \lambda_k^i \ketbra{\phi_k^i}{\phi_k^i}_A$ and $\rho_B^i=\sum_k \mu_k^i \ketbra{\varphi_k^i}{\varphi_k^i}_B$ and we see that $E_f(A:B)_\rho=0$. 
Conversely, if $E_f(A:B)_\rho=0$ then there exists a decomposition into states $\psi_i$ such that $S(B)_{\psi_i}=0$ for all $i$, which means all $\psi_i$ are tensor product, and hence $\rho$ is separable. 

The entanglement of formation satisfies the following property, which shows that it can't grow too much as you add subsystems.
\begin{lemma} The entanglement of formation satisfies
    \begin{align}\label{eq:SRcontinuity}
    E_f(A:BC)_\rho\leq E_f(A:B)_\rho + \log d_C.
\end{align}
\end{lemma}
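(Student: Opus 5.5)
Fix an optimal decomposition for $E_f(A:B)_\rho$ and lift it to $ABC$ by purification. Let $\rho_{AB}=\sum_i p_i\ketbra{\psi_i}{\psi_i}_{AB}$ attain the minimum in the definition, so $E_f(A:B)_\rho=\sum_i p_i S(A)_{\psi_i}$ (using the equivalent form with $S(A)$). Purify $\rho_{ABC}$ to $\ket{\Phi}_{ABCR}$. Since $\sum_i p_i\ketbra{\psi_i}{\psi_i}$ is a decomposition of $\rho_{AB}=\tr_{CR}\Phi$, the Hughston--Jozsa--Wootters theorem gives a measurement on $CR$ realising it: there is a POVM $\{M_i\}$ on $CR$ with $\tr_{CR}\big[(I\otimes M_i)\Phi\big]=p_i\psi_i$. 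Applying the same POVM elements to $R$ alone, after tracing out nothing, yields a decomposition of $\rho_{ABC}$.

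The simplest route avoids POVMs. Purify instead $\rho_{ABC}$ to $\ket{\Phi}_{ABCR}$ and note that the $AB$-marginal has purification on $CR$. By unitary freedom of purifications, choose an orthonormal basis $\{\ket{i}\}$ of a register that, after an isometry $V$ on $CR$, gives
\begin{align}
(I_{AB}\otimes V)\ket{\Phi} = \sum_i \sqrt{p_i}\,\ket{\psi_i}_{AB}\ket{i}_{E}\ket{e_i}_{E'},
\end{align}
where this is possible because $\sum_i\sqrt{p_i}\ket{\psi_i}\ket{i}$ is also a purification of $\rho_{AB}$. Measuring $E$ in the basis $\{\ket{i}\}$ is then a local operation on the purifying system. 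Undoing $V$, this measurement is some POVM $\{N_i\}$ on $CR$, and the post-measurement states of $ABC$ are $\sigma^i_{ABC}=\tr_R[(I\otimes N_i)\Phi]/p_i$, with $\sigma^i_{AB}=\psi_i$ and $\sum_i p_i\sigma^i_{ABC}=\rho_{ABC}$. Each $\sigma^i_{ABC}$ may be mixed, so decompose it further into pure states $\ket{\chi_{ij}}_{ABC}$ with weights $q_{ij}$. This gives a legitimate ensemble for $\rho_{ABC}$.

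Bounding the entropy is now straightforward. For each pure $\chi_{ij}$, $S(A)_{\chi_{ij}}$ is the entanglement across $A:BC$. Concavity of entropy gives $\sum_j q_{ij}S(A)_{\chi_{ij}}\le S(A)_{\sigma^i}=S(A)_{\psi_i}$, since $\sigma^i_A=\psi_{i,A}$. This would prove the stronger statement $E_f(A:BC)\le E_f(A:B)$, which is false, so the construction must have an error. The fix is that the $\chi_{ij}$ come from $\sigma^i_{ABC}$, whose $AB$ marginal is pure ($\psi_i$), so $\sigma^i_{ABC}=\psi_i\otimes\tau^i_C$ and there is nothing to lose. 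The real issue is that the ensemble for $\rho_{AB}$ cannot in general be lifted to an ensemble of $\rho_{ABC}$ with pure $AB$ marginals; the POVM $N_i$ acts on $CR$, and $C$ is part of the system. I therefore expect this step to be the main obstacle, and I would restructure the argument as follows.

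Take an optimal decomposition $\rho_{ABC}=\sum_k r_k\ketbra{\phi_k}{\phi_k}$ for $E_f(A:BC)$. Conversely, any decomposition $\{p_i,\psi_i\}$ of $\rho_{AB}$ extends to a decomposition of $\rho_{ABC}$ by measuring the purifier $R$ of $\rho_{ABC}$ only, after extending $\psi_i$. The cleaner approach is to compare entropies directly. For a pure state $\phi_{ABC}$,
\begin{align}
S(A)_\phi = S(BC)_\phi \le S(B)_\phi + S(C)_\phi \le S(A C')\ldots
\end{align}
Here I would use the Araki--Lieb inequality $S(A)_\phi=S(BC)_\phi\ge |S(B)-S(C)|$, together with the more useful form $S(BC)\le S(AB... )$. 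Concretely, I would pick a decomposition of $\rho_{ABC}$ obtained by measuring $R$ in a basis compatible with the optimal $\rho_{AB}$ ensemble, with $C$ kept coherent. Then each element $\phi_i$ of $ABC$ has $AB$-marginal $\psi_i'$ with $\sum_i p_i \psi'_i$ a decomposition of $\rho_{AB}$. Subadditivity-type reasoning then gives $S(A)_{\phi_i}\le S(A)$-entanglement of $\psi'_i$ plus $\log d_C$, via $S(A)_{\phi_i}=S(BC)_{\phi_i}\le S(ABC... )$. Averaging yields $E_f(A:BC)\le \sum_i p_iE_f(A:B)_{\psi'_i}+\log d_C$, and the final step uses convexity of $E_f$ to compare this with $E_f(A:B)_\rho$.
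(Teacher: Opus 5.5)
\textbf{The last step runs in the wrong direction.} Convexity of $E_f$ gives $E_f(A:B)_\rho\le\sum_ip_iE_f(A:B)_{\psi'_i}$. So from $E_f(A:BC)_\rho\le\sum_ip_iE_f(A:B)_{\psi'_i}+\log d_C$ nothing follows about $E_f(A:B)_\rho+\log d_C$. To finish you would need a decomposition of $\rho_{ABC}$, i.e.\ a measurement on $R$ alone, whose $AB$-marginals form a near-optimal ensemble for $\rho_{AB}$. Writing ``measure $R$ in a basis compatible with the optimal ensemble'' just assumes the lifting you had correctly diagnosed as impossible in general, since the HJW measurement realising that ensemble acts on $CR$.

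What survives is the per-term bound, i.e.\ the pure-state case, but it needs an argument you did not give; subadditivity alone does not do it. For pure $\phi_{ABC}$, realise an optimal ensemble $\{q_j,\chi_j\}$ of $\phi_{AB}$ by a POVM on $C$ with outcome $J$. Then $S(A)_\phi-\sum_jq_jS(A)_{\chi_j}=I(A:J)\le I(AB:J)=S(AB)_\phi=S(C)_\phi\le\log d_C$. For general $\rho$ this yields only $E_f(A:BC)_\rho\le\sum_kr_kE_f(A:B)_{\tr_C\phi_k}+\log d_C$ for decompositions $\{r_k,\phi_k\}$ of $\rho_{ABC}$.

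\textbf{The gap cannot be closed, because the inequality is false for mixed states.} This is the known lockability of $E_f$ (cf.\ Horodecki, Horodecki, Horodecki and Oppenheim, ``Locking entanglement with a single qubit''). A concrete counterexample:
\begin{itemize}
\item Let $\mathbb{C}^d\otimes\mathbb{C}^d=\bigoplus_{k=1}^KS_k$ be an orthogonal decomposition such that every unit vector in every $S_k$ has entanglement entropy at least $\log d-c$, for a constant $c$. By the generic-entanglement theorem of Hayden, Leung and Winter, a Haar-random decomposition into $K=\mathrm{poly}(\log d)$ blocks has this property (union bound over the blocks).
\item Put $\rho_{ABC}=d^{-2}\sum_k\Pi_{S_k}\otimes\ketbra{k}{k}_C$. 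Then $\rho_{AB}=I/d^2$ is separable, so $E_f(A:B)_\rho=0$.
\item Every pure state in the range of $\rho_{ABC}$ has the form $\sum_kc_k\ket{\chi_k}\ket{k}_C$ with $\ket{\chi_k}\in S_k$. Its $A$-marginal is $\sum_k|c_k|^2\tr_B\ketbra{\chi_k}{\chi_k}$, so concavity gives $E_f(A:BC)_\rho\ge\log d-c$.
\item This is far above $\log d_C=O(\log\log d)$, so the claimed bound fails for large $d$.
\end{itemize}

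The paper's one-line justification (definition of $E_f$, subadditivity, $S(X)\le\log d_X$) also only handles the pure case, so it does not contain the missing idea either. This matters, because the paper invokes the lemma for the generally mixed state $\sigma^1_{QM_1R}$ in its dimension lower bound. By contrast, a non-lockable measure such as $E_R$ does obey $E_R(A:BC)\le E_R(A:B)+2\log d_C$, by comparing with $\sigma_{AB}\otimes I_C/d_C$.
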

This statement follows from the definition of the entanglement of formation, subadditivity of the von Neumann entropy, and the statement $S(X)\leq \log d_X$. 

We also have a data processing inequality for the entanglement of formation \cite{bennett1996mixed}.
\begin{lemma}\label{lemma:Efdataprocessing} The entanglement of formation is non-increasing under the action of a local quantum channel,\footnote{In fact the entanglement of formation is also decreasing under LOCC operations, although we will not need that stronger property here.} 
\begin{align}
    E_f(A:B)_{\rho_{AB}} \geq E_f(A:B')_{\mathcal{I}_{A}\otimes \mathcal{N}_{B\rightarrow B'}(\rho_{AB})}.
\end{align}
\end{lemma}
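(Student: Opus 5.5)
The plan is to work directly from the definition of the entanglement of formation as a minimum over pure-state ensembles. I will show that an optimal decomposition of $\rho_{AB}$ induces a decomposition of the output state $\sigma_{AB'}=\mathcal{I}_A\otimes\mathcal{N}_{B\rightarrow B'}(\rho_{AB})$ whose average entanglement is no larger. Let $\{p_i,\ket{\psi_i}\}$ achieve the minimum for $\rho_{AB}$, so that $E_f(A:B)_\rho=\sum_i p_i S(A)_{\psi_i}$. Here I use the form of the definition with $S(A)$, which the paper notes is equivalent.

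\textbf{Step 1: the channel leaves the $A$ marginals unchanged.} Apply the channel to each ensemble member, giving $\sigma^i_{AB'}=\mathcal{I}_A\otimes\mathcal{N}(\psi_i)$. Since the channel is trace preserving and acts only on $B$, we have $\sigma^i_A=\tr_B\psi_i$, and so $S(A)_{\sigma^i}=S(A)_{\psi_i}$. By linearity, $\sigma_{AB'}=\sum_i p_i\sigma^i_{AB'}$.

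\textbf{Step 2: show that $E_f$ of a state is at most the entropy of its $A$ marginal.} The states $\sigma^i$ are generally mixed, so I need $E_f(A:B')_{\tau}\leq S(A)_\tau$ for any mixed $\tau_{AB'}$. To prove this, decompose $\tau=\sum_k q_k\ketbra{\phi_k}{\phi_k}$ in any way. Then $\tau_A=\sum_k q_k(\phi_k)_A$, and concavity of the von Neumann entropy gives $\sum_k q_k S(A)_{\phi_k}\leq S(A)_\tau$. Since $E_f$ is the minimum over all such decompositions, it is at most the left-hand side.

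\textbf{Step 3: use convexity of $E_f$.} Concatenating optimal pure decompositions of each $\sigma^i$, weighted by $p_i$, gives a valid pure-state decomposition of $\sigma_{AB'}$. Hence
\begin{align}
E_f(A:B')_\sigma\leq\sum_i p_i E_f(A:B')_{\sigma^i}\leq \sum_i p_i S(A)_{\sigma^i}=\sum_i p_i S(A)_{\psi_i}=E_f(A:B)_\rho .
\end{align}
The second inequality is Step 2 and the first equality is Step 1, which proves the claim.

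The step I expect to be the main obstacle is Step 2. The definition of $E_f$ is stated with pure-state ensembles, and the output ensemble members are mixed, so the monotonicity has to be recovered through concavity of entropy. Concavity must be applied to the $A$ marginal rather than the $B'$ marginal, since only the $A$ marginal is preserved by the channel.

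A secondary technicality is that the minimum in the definition is attained, which holds because finitely many ensemble elements suffice by Carathéodory's theorem. If one prefers to avoid this, one can use an $\delta$-optimal ensemble and let $\delta\to 0$.
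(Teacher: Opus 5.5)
Your proof is correct, and it takes a different route from the paper, which gives no argument of its own for this lemma. The paper only cites \cite{bennett1996mixed}, where the stronger statement (monotonicity of $E_f$ under general LOCC) is obtained from two facts: LOCC cannot increase the expected entropy of entanglement of pure states, and the convex-roof definition then carries this over to mixed states.

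Your argument is a self-contained instance of the same mechanism. A trace-preserving channel on $B$ leaves each marginal $(\psi_i)_A$ unchanged. Concavity of the von Neumann entropy bounds the convex roof of each output $\sigma^i$ by $S(A)_{\sigma^i}$. Convexity of $E_f$ then reassembles the pieces.

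The advantage of your route is that it needs nothing beyond concavity of $S$. It covers exactly what the paper uses: in the proof of Lemma~\ref{lemma:dimensionlowerbound}, every application is a channel on the side opposite $Q$. Channels on $A$ follow from the symmetric form of the definition with $S(B)$, and local channels on both sides follow by composition.

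What the citation gives that your proof does not is the classical-communication part of LOCC monotonicity, which the paper explicitly says it does not need.

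One small simplification: in Step~3 you can use the spectral decompositions of the $\sigma^i$ directly rather than optimal ones. Concavity then gives the same bound, and you avoid the attainment question for the $\sigma^i$.
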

Finally, we recall a Fannes type continuity bound \cite{nielsen2000continuity, winter2016tight} for the entanglement of formation. 
\begin{lemma}\label{lemma:Efcontinuity}
Consider two states $\rho_{AB}$, $\sigma_{AB}$ with $\Vert\sigma_{AB}-\rho_{AB}\Vert\leq \epsilon$, define $\eta_\epsilon=2\sqrt{\epsilon(1-\epsilon)}$ and $d=\min\{d_A,d_B\}$.
Then, the entanglement of formation of $\rho_{AB}$ and $\sigma_{AB}$ cannot be too different:
\begin{align}
    |E_f(A:B)_\rho - E_f(A:B)_\sigma| &\leq \eta_\epsilon \log d + H(\eta_\epsilon)
\end{align}
where $H(x) = (1+x)h\left(\frac{x}{1+x}\right)$ and $h(x)=-x\log x -(1-x)\log (1-x)$. 
\end{lemma}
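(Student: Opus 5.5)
The plan is Nielsen's strategy: realize an optimal decomposition of one state as a measurement on a purifying system, transport that measurement to the other state via Uhlmann's theorem, and then compare average local entropies using a continuity bound for conditional entropy with classical conditioning. By symmetry in $\rho\leftrightarrow\sigma$ it suffices to show
\begin{align}
E_f(A:B)_\rho \leq E_f(A:B)_\sigma + \eta_\epsilon\log d + H(\eta_\epsilon).
\end{align}
Since the definition of $E_f$ may use either $S(A)_{\psi_i}$ or $S(B)_{\psi_i}$, we may also assume $d=d_A$.

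\textbf{Step 1 (realizing the decomposition).} Let $\{p_y,\ket{\psi_y}\}$ be an optimal decomposition of $\sigma_{AB}$. Set $\ket{\psi_\sigma}_{ABX}=\sum_y\sqrt{p_y}\ket{\psi_y}_{AB}\ket{y}_X$, so that measuring $X$ in the basis $\{\ket{y}\}$ produces the classical-quantum state
\begin{align}
\omega^\sigma_{ABY}=\sum_y p_y\,\psi_y\otimes\ketbra{y}{y}_Y .
\end{align}
Its conditional entropy satisfies $S(A|Y)_{\omega^\sigma}=\sum_y p_y S(A)_{\psi_y}=E_f(A:B)_\sigma$.

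\textbf{Step 2 (transporting to $\rho$).} Use Fuchs--Van de Graaf \eqref{eq:FVdG} to turn $\Vert\sigma-\rho\Vert\leq\epsilon$ into a lower bound on $F(\sigma_{AB},\rho_{AB})$. By Uhlmann's theorem, as in the proof of Lemma~\ref{lemma:traceUhlmann}, pick a purification $\ket{\psi_\rho}_{ABX}$ of $\rho$ on the same space whose overlap with $\ket{\psi_\sigma}$ achieves this fidelity. Apply the same measurement on $X$ to obtain $\omega^\rho_{ABY}=\sum_y q_y\,\rho_y\otimes\ketbra{y}{y}_Y$, where $\{q_y,\rho_y\}$ is a pure-state decomposition of $\rho_{AB}$. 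Then
\begin{align}
E_f(A:B)_\rho\leq \sum_y q_y S(A)_{\rho_y}=S(A|Y)_{\omega^\rho}.
\end{align}
The measurement channel is contractive and the two purifications are pure, so
\begin{align}
\tfrac12\Vert\omega^\rho_{AY}-\omega^\sigma_{AY}\Vert_1\leq\sqrt{1-|\braket{\psi_\sigma}{\psi_\rho}|^2}.
\end{align}
Reading $\Vert\cdot\Vert$ in the statement as trace distance, so that $F\geq(1-\epsilon)^2$, this evaluates to $\sqrt{\epsilon(2-\epsilon)}$; the same route with the bound $F\geq 1-\epsilon$ (as used in Lemma~\ref{lemma:traceUhlmann}) gives $\sqrt{\epsilon}$. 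Both are at most $\eta_\epsilon=2\sqrt{\epsilon(1-\epsilon)}$ in the regime $\epsilon\leq 1/2$ where the bound is useful.

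\textbf{Step 3 (continuity of conditional entropy).} Apply the Alicki--Fannes--Winter bound for conditional entropy with classical conditioning system \cite{winter2016tight}:
\begin{align}
|S(A|Y)_{\omega^\rho}-S(A|Y)_{\omega^\sigma}|\leq \eta_\epsilon\log d_A+(1+\eta_\epsilon)h\!\left(\tfrac{\eta_\epsilon}{1+\eta_\epsilon}\right),
\end{align}
valid whenever the trace distance of the classical-quantum states is at most $\eta_\epsilon$. The dimension of $Y$ is arbitrary, but because $Y$ is classical the bound depends only on $d_A$. Chaining Steps 1--3 gives the one-sided inequality, and exchanging $\rho$ and $\sigma$ gives the other.

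\textbf{Main obstacle.} The delicate step is Step 3. A naive Fannes bound applied to each $\psi_y$ versus $\rho_y$ fails, since individual members of the ensembles can be far apart even when the averages are close. The argument therefore has to go through the conditional entropy of the whole classical-quantum state, using the classical-conditioning form of Winter's bound (with its $\log d_A$ rather than $2\log d_A$ prefactor). A secondary issue is bookkeeping of the fidelity-to-distance conversion in Step 2: checking that the normalization convention for $\Vert\cdot\Vert$ in the statement indeed makes the distance bound at most $\eta_\epsilon$.
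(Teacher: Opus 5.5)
Your proof is correct. The paper does not prove this lemma itself: it only cites \cite{nielsen2000continuity, winter2016tight}. Your Steps 1--3 are essentially the proof of the entanglement-of-formation corollary in \cite{winter2016tight}. That proof transports an optimal decomposition of $\sigma$ to $\rho$ via Nielsen's trick (Uhlmann-matched purifications plus a measurement on the purifying system), then applies the Alicki--Fannes--Winter bound with classical conditioning.

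That route actually gives the sharper parameter $\sqrt{t(2-t)}$ at trace distance $t$. The paper's $\eta_\epsilon$ form then follows from monotonicity of $x\mapsto x\log d+H(x)$ whenever $\epsilon\le 2/3$, under either normalization of the norm.

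Two minor points.
\begin{enumerate}
\item \emph{Size of the purifying space.} Uhlmann's theorem needs $X$ to be large enough to purify $\rho$ as well as $\sigma$. If the optimal decomposition of $\sigma$ has fewer than $\operatorname{rank}\rho$ members, first enlarge $X$ (say to dimension $d_Ad_B$) and measure in a completed basis. The extra outcomes have zero weight under $\sigma$ and only add further pure members to the ensemble for $\rho$.
\item \emph{Restriction on $\epsilon$.} Your restriction to small $\epsilon$ is necessary, not merely convenient. Since $\eta_\epsilon\to 0$ as $\epsilon\to 1$, the statement as written is false there. For example, under the trace-distance reading, $\ketbra{00}{00}$ and a maximally entangled state orthogonal to it give $\epsilon=1$ and $\eta_\epsilon=0$, yet their $E_f$ differ by $1$. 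A hypothesis such as $\epsilon\le 1/2$ belongs in the statement. This does not affect the paper, which only applies the lemma at small $\epsilon$.
\end{enumerate}
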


We will also make brief use of a second entanglement measure, defined next. 
\begin{definition}\label{def:ER}
The relative entropy of entanglement is defined as
\begin{align}
    E_R(A:B)_\rho = \min_{\sigma_{AB}\in SEP_{AB}} D(\rho_{AB}||\sigma_{AB}),
\end{align}
where $D(\rho||\sigma)=\Tr(\rho\log\rho-\rho\log\sigma)$ is the quantum relative entropy and $SEP_{AB}$ is the set of separable states on $AB$.
\end{definition}

\section{Lower bound from controllable correlation}\label{sec:correlationlowerbound}

In this section we discuss the first of our two lower bounds, which considers the controllable correlation. 

\subsection{Proof of the lower bound}

In this section we give our first lower bound technique. 
The technique involves correlating one of the inputs to the gate of interest with a reference system in a state $P_{QA}$.
This state must be correlated across $Q\!:\!A$, but otherwise we leave our choice of this state free for now and fix it in concrete examples later.
We obtain a lower bound on entanglement in the resource state if it is possible to control whether correlation in $Q\!:\!A$ is preserved or destroyed by choosing the state of the input on $B$. 
We first give the following definition, which captures this notion of controlling correlation more precisely. 
\begin{definition}
    Consider a unitary $U_{AB}$ and choose a state on $QA$, which we label $P_{QA}$, with $n_Q=n_A$.
    Define the states
    \begin{align}
        \rho^1_{QAB}&=U_{AB}(P_{QA}\otimes \phi^1_B)U_{AB}^\dagger, \nonumber \\
        \rho^2_{QAB}&=U_{AB}(P_{QA}\otimes \phi^2_B )U_{AB}^\dagger.
    \end{align}
    We say that $U_{AB}$ has $(\lambda_1,\lambda_2)$-controllable correlation if there exists states $\phi^1_B, \phi^2_B$, and $P_{QA}$ such that
    \begin{align}
        \lambda_1=I(Q:A)_{\rho^1}, \qquad\lambda_2=I(Q:A)_{\rho^2}.
    \end{align}
    We always consider the case where $\lambda_1\geq \lambda_2$. If there is no choice of states $P_{QA}$, $\phi_B^1,\phi_B^2$ such that $\lambda_1>\lambda_2$ we say $U_{AB}$ is not controllably correlated. 
\end{definition}

As a simple example, the CNOT$_{B\rightarrow A}$ gate is controllably correlated: choose for instance $P_{QA} = (\rho_{cc})_{QA}=\frac{1}{2}(\ketbra{00}{00}_{QA}+\ketbra{11}{11}_{QA})$. Taking first the control on $B$ to be $\ket{0}$, CNOT$_{B\rightarrow A}$ acts identically on $A$, leaving $QA$ in the maximally classically correlated state $P_{QA}$, so $\lambda_1$ is $1$. 
On the other hand choosing the input on $B$ to be the maximally mixed state erases the state on $A$ and leaves it product with $Q$, so $\lambda_2=0$. 
In contrast, the SWAP$_{AB}$ gate is not controllably correlated --- regardless of the input on $B$, the final state on $QA$ will be product, so we always have $\lambda_1=\lambda_2=0$.
Similarly, the identity is not controllably correlated since $P_{QA}$ will not be influenced by the input on $B$.

Our main result of this section is the following theorem, which expresses that controllably correlated unitaries require entanglement to be implemented as a NLQC. 
System labels used in the proof are shown in figure \ref{fig:NLQC_correlation}.
\begin{theorem}\label{thm:controllablecorrelation}
    Suppose that unitary $U_{AB}$ has $(\lambda_1,\lambda_2)$-controllable correlation.
    If a NLQC protocol using a resource $\Psi$ gives an $\epsilon$-correct implementation of $U_{AB}$, then
    \begin{align}
        E_f(L:R)_\Psi \geq \frac{\lambda_1-\lambda_2}{2}-\Delta(2\sqrt{d_{AB}\sqrt{2\epsilon}},n_A)
    \end{align}
    where 
    \begin{align}
        \Delta(x,n_A) = 4n_A{x}+(1+{2x})h\left(\frac{{2x}}{1+{2x}} \right).
    \end{align}
\end{theorem}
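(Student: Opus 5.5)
The plan is to reduce to a pure resource state, prove a clean inequality in the error-free case, and then bring in continuity for the errors. Write $\Psi_{LR}=\sum_i p_i \psi^i_{LR}$ as an optimal decomposition for $E_f(L:R)_\Psi$, so that $E_f(L:R)_\Psi=\sum_i p_i S(L)_{\psi^i}$. Running the protocol on the branch $\psi^i$ defines a channel $\mathcal{N}^i$ on $AB$, and the channel actually implemented is $\sum_i p_i\mathcal{N}^i$. By $\epsilon$-correctness and Lemma \ref{lemma:Uextreme}, $\sum_i p_i\Vert\mathcal{N}^i-\mathcal{U}\Vert_\diamond\le d_{AB}\sqrt{2\epsilon}$. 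For each branch we then compare $\rho^{1}$, $\rho^{2}$ with the corresponding outputs of $\mathcal{N}^i$ on $P_{QA}\otimes\phi^{1,2}_B$. Lemma \ref{lemma:MIcontinuity} controls the change in $I(Q:A)$. Concavity of $\Delta(\cdot,n_A)$ in its first argument (together with Jensen, which is where the $\sqrt{\cdot}$ inside the argument should come from) then turns the averaged diamond-norm error into the single correction term $\Delta(2\sqrt{d_{AB}\sqrt{2\epsilon}},n_A)$. This leaves the exact statement to prove for a single pure resource $\psi$ and an arbitrary protocol, with $\lambda_j^\psi$ denoting the $Q\!:\!A$ mutual information that this protocol produces on input $\phi^j_B$:
\begin{align}
\lambda_1^{\psi}-\lambda_2^{\psi}\le I(L:R)_\psi = 2S(L)_\psi .
\end{align}
Averaging over $i$ then gives the factor $1/2$ in the theorem.

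For the pure-resource inequality I would track correlations through the protocol, using the labels of figure \ref{fig:NLQC_correlation}. Let $M_1$ be Alice's kept system, $M_2$ the system she sends, and $N$ the system Bob sends to Alice. Alice's final output $A$ is produced by $\mathcal{W}^L$ acting on $M_1N$. By data processing,
\begin{align}
I(Q:A)\le I(Q:M_1N)=I(Q:M_1)+I(Q:N|M_1).
\end{align}
Two observations drive the argument. First, $I(Q:M_1)$ is computed before any communication from Bob, so it does not depend on the choice $\phi^j_B$. Second, $N$ is produced from $RB$, where $B$ is product with everything else. Hence
\begin{align}
I(Q:N|M_1)\le I(QM_1:N)\le I(QM_1:RB)=I(QM_1:R)\le I(QAL:R)=I(L:R),
\end{align}
using data processing twice and the fact that $QA$ is product with $LR$. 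This yields $\lambda_1^\psi\le I(Q:M_1)+I(L:R)$.

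The matching lower bound $\lambda_2^\psi\ge I(Q:M_1)$ is what makes the $B$-dependence appear only through a term bounded by $I(L:R)$. This is the step I expect to be the main obstacle, because $\mathcal{W}^L$ can discard part of $M_1$ depending on $N$.

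To get it, I would first try the conservation identity from unitarity. Near-exact correctness gives $I(Q:A_{\rm out}B_{\rm out})=I(Q:A)_P$ for every $\phi_B$. Hence $\lambda_j=I(Q:A)_P-I(Q:B_{\rm out}|A_{\rm out})_{\rho^j}$, so the difference is governed by how much of $Q$'s correlation can sit on Bob's side conditional on Alice's. I would then run the same message-tracking argument symmetrically on Bob's output, which is built from his kept system and $M_2$. The aim is to show that $I(Q:B_{\rm out}|A_{\rm out})$ differs between the two inputs by at most $I(L:R)$.

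If this symmetric bookkeeping does not close directly, my fallback is to bound both $\lambda_1$ and $\lambda_2$ against a common reference protocol in which Bob's message is replaced by one independent of $B$. Each $\lambda_j$ would then be within $I(L:R)$ of that reference value, at the cost of a weaker constant.
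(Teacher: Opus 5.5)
Your reduction to a pure resource, the causality observation, and the chain $I(Q:A)\le I(Q:M_1N)=I(Q:M_1)+I(Q:N|M_1)\le I(Q:M_1)+I(L:R)$ are all correct. They match the paper: your $N$ is the paper's $M_2$, and your chain is its bound $I(M_1:M_2|Q)\le I(L:R)$ rearranged.

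The gap is exactly the step you flag, and it cannot be closed in the form you set up, because $\lambda_1^\psi-\lambda_2^\psi\le I(L:R)_\psi$ is false for an arbitrary protocol. Take a product resource. Let Bob send $B$ as $N$, let Alice keep $A$ as $M_1$, and let $\mathcal{W}^L$ apply $U_{AB}$ to $M_1N$ and discard the $B$ output. Then $\lambda_j^\psi=\lambda_j$ and $I(L:R)=0$. For CNOT$_{B\rightarrow A}$ with $P_{QA}=\rho_{cc}$ you would get $1\le 0$, and here $I(Q:M_1)=1>0=\lambda_2^\psi$. This is the $\mathrm{tr}_B\circ\mathcal{U}$ example from the paper's discussion section. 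So the bound $I(Q:M_1)\lesssim\lambda_2$ must come from correctness of the protocol on input $\phi^2_B$, used more strongly than through continuity of $\lambda_j^\psi$.

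Neither fallback supplies this. The identity $I(Q:A_{\rm out}B_{\rm out})=I(Q:A)_P$ says nothing about what $\mathcal{W}^L$ discards into its environment. The reference-protocol idea is contradicted by the same example, where $I(L:R)=0$ would force $\lambda_1=\lambda_2$.

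The missing idea is that the environment of Alice's final operation must decouple, because the overall map is (close to) an isometry. The steps are:
\begin{enumerate}
\item Fix a pure branch with error $\gamma_i=\Vert\mathcal{N}^i-\mathcal{U}\Vert_\diamond$ and dilate $\mathcal{W}^L$ to an isometry $M_1N\rightarrow AE$.
\item Purify $P_{QA}$ and $\phi^2_B$ (this is needed when they are mixed, e.g.\ for $\rho_{cc}$), so that the ideal output of $U_{AB}$ is pure.
\item Every extension of a pure state is a product with $E$, so Lemma \ref{lemma:traceUhlmann} gives $\Vert\sigma^2_{QAE}-\rho^2_{QA}\otimes\kappa_E\Vert_1\le 2\sqrt{\gamma_i}$.
\item Hence $I(Q:M_1)_{\sigma^2}\le I(Q:M_1N)_{\sigma^2}=I(Q:AE)_{\sigma^2}\le\lambda_2+\Delta(2\sqrt{\gamma_i},n_A)$.
\item Causality transfers this bound to $I(Q:M_1)_{\sigma^1}$.
\item Combine with your chain and $I(Q:A)_{\sigma^1}\ge\lambda_1-\Delta(\gamma_i,n_A)$ to get $2S(L)_{\psi_i}\ge\lambda_1-\lambda_2-2\Delta(2\sqrt{\gamma_i},n_A)$.
\item Average using Lemma \ref{lemma:Uextreme} and concavity of $x\mapsto\Delta(2\sqrt{x},n_A)$ to finish the proof.
\end{enumerate}
This Uhlmann step is also where the outer $2\sqrt{\cdot}$ in the error term comes from. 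Jensen only moves the average inside $\Delta$, and the inner $\sqrt{2\epsilon}$ comes from Lemma \ref{lemma:Uextreme}.
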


\begin{proof}
    Recall that we defined the states
    \begin{align}
        \rho^1_{QAB}&=U_{AB}(P_{QA}\otimes \phi^1_B)U_{AB}^\dagger, \nonumber \\
        \rho^2_{QAB}&=U_{AB}(P_{QA}\otimes \phi^2_B )U_{AB}^\dagger.
    \end{align}
    These are the states resulting from the exact implementation of the unitary $U_{AB}$. 
    When $U_{AB}$ is replaced by the $\epsilon$-close implementation, we label the resulting states as $\sigma^1$ and $\sigma^2$, and note that we have
    \begin{align}
        \Vert\rho^1_{QAB}-\sigma^1_{QAB}\Vert_1\leq \epsilon, \nonumber \\
        \Vert\rho^2_{QAB}-\sigma^2_{QAB}\Vert_1\leq \epsilon,
    \end{align}
    which follows from the definition of the diamond norm distance.
    Note further that we write $\sigma_{QM_1M_2}^{1,2}$ for the states produced mid-way through the NLQC protocol (see figure \ref{fig:NLQC_correlation}) upon giving input $\phi^{1,2}_B$. 
    As well, we will drop the state label when considering the entropy of $Q$, since this is unaffected by the state on $B$. 
    Thus $S(Q)=S(Q)_{\sigma^1}=S(Q)_{\sigma^2}$. 

    We wish to understand how systems $M_1$ and $M_2$ are related to system $Q$. 
    First observe that by the causal structure of the circuit,
    \begin{align}
        \sigma_{M_2Q}^{i} = \sigma_{M_2}^i\otimes \rho_Q.
    \end{align}
    This holds regardless of the input on $B$, so for both $\sigma^1$ and $\sigma^2$.
    We will use this below in the form
    \begin{align}\label{eq:M2}
    \boxed{S(M_2Q)_{\sigma^1}=S(M_2)_{\sigma^1}+S(Q)_{\sigma^1}.}
    \end{align}
    
    Next, consider $M_1$.
    We have by assumption that
    \begin{align}
        I(Q:A)_{\rho^2}=\lambda_2.
    \end{align}
    We need to undo the last step of the NLQC circuit so as to construct the state on $M_1M_2$ from that on $A$.
     To do this, we consider taking a dilation of the channel $\mathcal{W}^L$ applied on Alice's side in the second round, and label the resulting unitary by $W^L_{M_1M_2\rightarrow AE}$ where $E$ is the ancillary system produced by the unitary. 
    This produces a density matrix $\sigma^2_{QAE}$.
    We claim this is close to $\rho^2_{QA}\otimes \kappa_E$ for some choice of density matrix $\kappa_E$. 
    To see why, recall that
    \begin{align}
        \Vert \sigma^2_{{Q}AB}-\rho^2_{{Q}AB} \Vert_1\leq \epsilon.
    \end{align}
    Considering the extension of $\sigma_{QAB}$ to $\sigma_{QABE}$, we apply lemma \ref{lemma:traceUhlmann} to find that there exists an extension of $\rho_{QAB}$ such that
    \begin{align}
        \Vert \sigma^2_{{Q}ABE}-\rho^2_{{Q}ABE} \Vert_1\leq 2\sqrt{\epsilon}.
    \end{align}
    But, then notice that $\rho^2_{QAB}$ is a pure state. 
    This means every extension must be of the form $\rho_{QAB}\otimes \kappa_E$ for some density matrix $\kappa_E$, so then
    \begin{align}
        \Vert\rho_{QA}^2\otimes \kappa_E - \sigma^2_{QAE}\Vert_1 \leq 2\sqrt{\epsilon}.
    \end{align}
    The state on $QM_1M_2$ then satisfies
    \begin{align}\label{eq:rooteps}
        \Vert(W^L_{M_1M_2\rightarrow AE})^\dagger( \rho_{{Q}A}^2\otimes \kappa_E)W^L_{M_1M_2\rightarrow AE} - \sigma^2_{{Q}M_1M_2}\Vert_1\leq 2\sqrt{\epsilon}.
    \end{align}
    Consider the state as above produced on giving input $\phi^2_B$, and consider the mutual information $I(Q:M_1)_{\sigma^2}$,
    \begin{align}
        I(Q:M_1)_{\sigma^2} &\leq I(Q:M_1M_2)_{\sigma^2} \nonumber \\ 
        &\leq I(Q:M_1M_2)_{\rho^2} + \Delta(2\sqrt{\epsilon},n_Q) \nonumber \\
        &=I(Q:A)_{\rho^2} + \Delta(2\sqrt{\epsilon},n_Q) \nonumber \\
        &=\lambda_2+\Delta(2\sqrt{\epsilon},n_Q).
    \end{align}
    The first inequality uses data processing, the second inequality uses equation \ref{eq:rooteps} and the continuity statement lemma \ref{lemma:MIcontinuity} and the third uses that $\rho_{QM_1M_2}=(W^L_{M_1M_2\rightarrow AE})^\dagger(\rho_{QA}^2\otimes \kappa_E)W^L_{M_1M_2\rightarrow AE}$.
    Finally, notice that the state on $M_1Q$ cannot depend on the input on $B$, so that
    \begin{align}
        I(Q:M_1)_{\sigma^1} = I(Q:M_1)_{\sigma^2} \leq \lambda_2 +\Delta(2\sqrt{\epsilon}, n_Q).
    \end{align}
    This statement is key to our proof and worth commenting on.
    This is telling us that even when inputting state $\phi^1_B$, the correlation across $Q\,:M_1$ must be small, and in particular similar to its value when inputting $\phi^2_B$.
    But, when the input state is $\phi^1_B$, a lot of correlation has to end up in $A$. 
    This means $M_1M_2$ is highly correlated with $Q$ even while $M_1$ is not. 
    We will use the above expression in the form
    \begin{align}\label{eq:M1}
        \boxed{S(M_1Q)_{\sigma^1}\geq S(M_1)_{\sigma^1}+S(Q)_{\sigma^1}-\lambda_2 - \Delta(2\sqrt{\epsilon},n_Q)}.
    \end{align}
    
    Continuing, we make use of the statement 
    \begin{align}
        I(Q:A)_{\rho^1}=\lambda_1.
    \end{align}
    To do so, we first use continuity of the mutual information to turn this into a statement about $\sigma^1$, 
    \begin{align}
        \lambda_1 - \Delta(\epsilon,n_Q)\leq I(Q:A)_{\sigma^1} .
    \end{align}
    Then observe that from data processing,
    \begin{align}
        I(Q:A)_{\sigma^1} \leq I(Q:M_1M_2)_{\sigma^1}
    \end{align}
    so then
    \begin{align}
        \lambda_1-\Delta(\epsilon,n_Q) \leq S(M_1M_2)_{\sigma^1}+S(Q)_{\sigma^1}-S(M_1M_2Q)_{\sigma^1}
    \end{align}
    or, rearranging,
    \begin{align}\label{eq:fixed}
        \boxed{S(M_1M_2Q)_{\sigma^1} \leq S(M_1M_2)_{\sigma^1}+S(Q) - \lambda_1 + \Delta(\epsilon,n_Q).}
    \end{align}
    We will use this below.

    Now, consider the conditional mutual information $I(M_1:M_2|Q)_{\sigma^1}$. 
    This is
    \begin{align}
        I(M_1:M_2|Q)_{\sigma^1} &= S(M_1Q)_{\sigma^1}+S(M_2Q)_{\sigma^1}-S(Q)_{\sigma^1}-S(M_1M_2Q)_{\sigma^1} \quad \qquad \qquad \,\,\,\,\text{definition of CMI} \nonumber \\
        &= S(M_1Q)_{\sigma^1}+S(M_2)_{\sigma^1}-S(M_1M_2Q)_{\sigma^1} \,\qquad \qquad\qquad\qquad \qquad \quad\,\,\,\,\,\,\,\,\,\,\, \qquad \text{eq. \eqref{eq:M2}} \nonumber \\
        &\geq S(M_1)_{\sigma^1} + S(Q) + S(M_2)_{\sigma^1} -\lambda_2 - S(M_1M_2Q)_{\sigma^1} -\Delta(2\sqrt{\epsilon},n_{Q}) \quad \quad \,\,\,\,\,\,\,\,\,\,\,\text{eq. \eqref{eq:M1}} \nonumber \\
        &\geq S(M_1)_{\sigma^1} + S(M_2)_{\sigma^1} - S(M_1M_2)_{\sigma^1} +\lambda_1 - \lambda_2 -\Delta(\epsilon,n_{Q})-\Delta(2\sqrt{\epsilon},n_{Q})\quad\,\text{eq. \eqref{eq:fixed}}\nonumber \\
        &\geq\lambda_1-\lambda_2 -2\Delta(2\sqrt{\epsilon},n_{Q})\,\,\,\,\quad  \qquad\qquad\qquad\qquad \,\,\qquad \qquad \quad\qquad\qquad \,\text{subadditivity} \nonumber 
    \end{align}
    so that the conditional mutual information is bounded below. In the last line we used that $\Delta(2\sqrt{\epsilon},n_{Q}) \geq \Delta(\epsilon,n_{Q})$, which holds because $\Delta$ is monotone increasing in the first argument, to simplify the error term. 

    Next, we would like to translate this to a bound on the mutual information of the resource state.
    Using that
    \begin{align}
        \sigma_{QM_1M_2}^1=\mathcal{V}^L_{AL\rightarrow M_1}\otimes \mathcal{V}^R_{RB\rightarrow M_2}(\sigma^{1}_{QALRB})
    \end{align}
    and data processing for the CMI (lemma \ref{lemma:dataprocessing}) we have that 
    \begin{align}
        I(M_1:M_2|Q)_{\sigma^1_{M_1M_2Q}}\leq I(RA:LB|Q)_{\sigma_{QALRB}^1} = I(R:L)
    \end{align}
    where in the second equality we used that
    \begin{align}
        \sigma^{1}_{QALRB} = P_{QA}\otimes \Psi_{LR}\otimes \phi^1_B.
    \end{align}
    Combined with our lower bound on the CMI, we have then
    \begin{align}
        \lambda_1-\lambda_2-2\Delta(2\sqrt{\epsilon},n_Q)\leq I(R:L)_{\Psi}.
    \end{align}
    Considering in particular a pure state resource, this leads to
    \begin{align}\label{eq:SRlowerboundCC}
        S(R)_\Psi \geq \frac{\lambda_1-\lambda_2}{2}-\Delta(2\sqrt{\epsilon},n_Q).
    \end{align}

    Finally, it remains to translate this to a lower bound on the entanglement of formation of the resource state. 
    For a mixed state resource, we can decompose the state as
    \begin{align}
        \Psi_{LR}=\sum_i p_i \ketbra{\Psi_i}{\Psi_i}_{LR}.
    \end{align}
    We will take the above to be the optimizing decomposition when computing the entanglement of formation for state $\Psi$. 
    Let $\mathcal{N}^i$ be the channel implemented by the NLQC protocol when given resource $\ket{\Psi_i}$. 
    Then the channel implemented given resource $\Psi$ is $\mathcal{N}=\sum_ip_i \mathcal{N}_i$, and we have by assumption that
    \begin{align}
        \epsilon \geq \Vert\mathcal{N} - U\Vert_\diamond = \left\Vert\sum_i p_i\mathcal{N}^i - U\right\Vert_\diamond.
    \end{align}
    From lemma \ref{lemma:Uextreme}, we have that
    \begin{align}
        \sum_i p_i\Vert \mathcal{N}^i-U\Vert_\diamond \leq d_{AB}\sqrt{2\epsilon}.
    \end{align}
    Now consider applying the lower bound on the entropy for each $\mathcal{N}^i$ separately. 
    Defining $\gamma_i=\Vert\mathcal{N}^i-U\Vert_\diamond$ we have that
    \begin{align}
        S(R)_{\Psi_i} \geq \frac{\lambda_1-\lambda_2}{2}-\Delta(2\sqrt{\gamma_i},n_Q).
    \end{align}
    Then the entanglement of formation is lower bounded by
    \begin{align}
        E_f(L:R)_\Psi &= \sum_i p_i S(R)_{\Psi_i} \nonumber \\
        &\geq \frac{\lambda_1-\lambda_2}{2} - \sum_i p_i \Delta(2\sqrt{\gamma_i},n_Q)\nonumber \\
        &\geq \frac{\lambda_1-\lambda_2}{2} -  \Delta\left(2\sqrt{\sum_ip_i\gamma_i},n_Q\right) \nonumber \\
        &\geq \frac{\lambda_1-\lambda_2}{2} -  \Delta(2\sqrt{d_{AB}\sqrt{2\epsilon}},n_Q)
    \end{align}
    as claimed. 
    In the second to last line we used that $\Delta(x,n)$ is concave in the first argument.
\end{proof}

Next, we comment on parallel repetition of lower bounds proven using the controllable correlation. 
\begin{corollary}
    Consider a unitary $G$ with $(\lambda_1,\lambda_2)$-controllable correlation.
    An exact implementation of $G^{\otimes n}$ as a NLQC requires entanglement of formation in the resource state $\Psi$ lower bounded by
    \begin{align}
        E_f(L:R)\geq n\left( \frac{\lambda_1-\lambda_2}{2}\right)
    \end{align}
    where $\lambda_1$, $\lambda_2$ are the parameters appearing in the controllable correlation for $G$. 
\end{corollary}
\begin{proof}
    Suppose that $\phi^1, \phi^2, P_{QA}$ can be used to show $G$ has $(\lambda_1, \lambda_2)$-controllable correlation. 
    Then considering $G^{\otimes n}$, use the correlated state $P_{QA}^{\otimes n}$, and inputs $(\phi^1)^{\otimes n}$, $(\phi^2)^{\otimes n}$, we obtain $\lambda_1'=n\lambda_1$, $\lambda_2'=n\lambda_2$, which lead to the stated lower bound.
\end{proof}

Finally, note that if we have a unitary $U_{AB}$ which is $\epsilon$ close to $G$ and implement it $n$ times, our techniques so far do not provide a good lower bound.
This is because the error in the implementation of $G^{\otimes n}$ becomes $n\gamma$, which eventually becomes larger than 1. 
Thus, we can so far only bound the parallel repeated setting in the case of zero error. 
However, if we assume that the noisy implementation of $U_{AB}$ is itself unitary, then we can simply use that it will have controllable correlation $(\lambda_1',\lambda_2')$ with $|\lambda_1-\lambda_1'|<\Delta(\epsilon, n_A)$, $|\lambda_2-\lambda_2'|<\Delta(\epsilon, n_A)$ which follows by continuity of the mutual information. 
This leads to a lower bound
\begin{align}
    E_f(L:R)\geq n\left( \frac{\lambda_1-\lambda_2}{2}-\Delta\right).
\end{align}
It is an open problem to obtain a similar lower bound in the case where we allow the noisy implementation to be a general quantum channel. 

\subsection{Evaluating the lower bound in simple cases}

For the case of two qubit gates, we provide code\footnote{\url{https://drive.google.com/file/d/18EGazQcjYsY2mlQVbEYVUoL3CJ_KbQDd/view?usp=sharing}} to compute the controllable correlation lower bound. 
We do this for a number of standard two qubit gates and report the results in table \ref{table:lowerbounds}.
Our code relies on a numerical optimization to select the states $\phi^1_B$ and $\phi^2_B$. 
This can be done quickly for the case of two qubit gates, but we note that because the function being optimized is non-convex (the mutual information), this may become difficult for larger Hilbert spaces. 

An interesting issue is the selection of the correlated state $P_{QA}$. 
We do not understand systematically which is the optimal choice of correlated state for a given gate.
In practice, we try both the maximally entangled state 
\begin{align}
    \ket{\Psi^+}_{QA} = \frac{1}{\sqrt{2}}(\ket{00}_{QA}+\ket{11}_{QA})
\end{align}
and the classically correlated state
\begin{align}
    \rho_{cc}=\frac{1}{2}(\ketbra{00}{00}_{QA}+\ketbra{11}{11}_{QA}).
\end{align}
We find that the lower bound using the two choices of state can be different, and even $0$ for one state while non-zero for the other. 

Figure \ref{fig:histogram} shows the results of computing numerical lower bounds, using the $\rho_{cc}$ state, for 100,000 samples drawn from the Haar distribution. 
A very similar distribution and average is obtained when using $\Psi^+_{QA}$.
We observe that the lower bound is never larger than $1/2$, and that we never find unitaries with zero entanglement cost\footnote{Or more precisely, we never find a lower bound that is within numerical precision of zero.}. 
Notice that since $\lambda_2\geq 0$ and $\lambda_1\leq 2n_B$, the lower bound provided by theorem \ref{thm:controllablecorrelation} is obviously never larger than $n_B$, so at most 1 for a two qubit gate. 
However, we do not have an analytical explanation for why the lower bound evaluates to at best $1/2$ in practice. 
The observation of not finding any unitaries with zero cost among these 100,000 samples suggests that the zero entanglement cost unitaries may be a set of measure zero. 
We leave investigating these observations to future work. 

\begin{figure}
    \centering
    \includegraphics[width=0.75\linewidth]{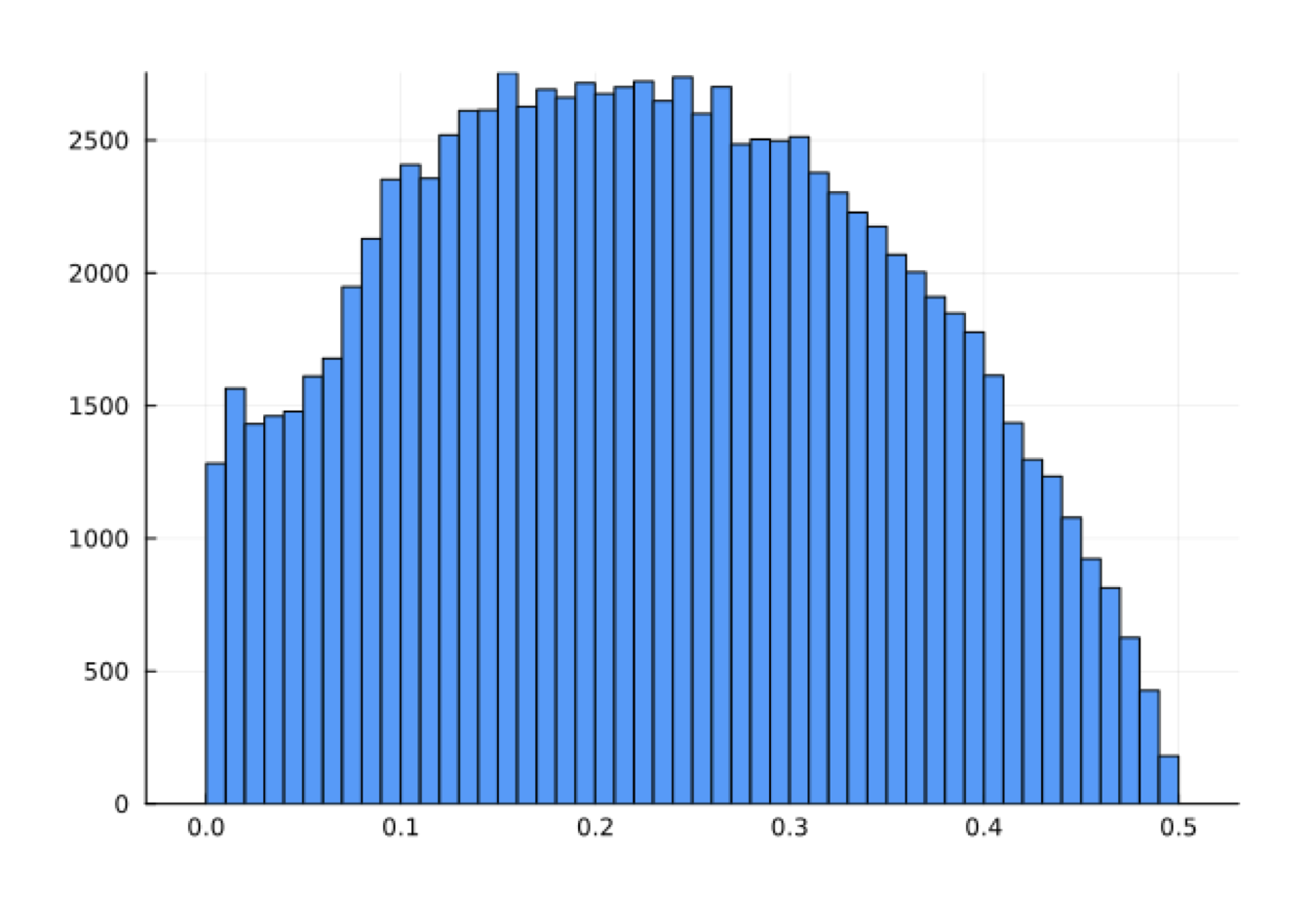}
    \caption{Histogram showing the value of the controllable correlation lower bound computed for $100,000$ two qubit unitaries drawn from the Haar distribution. The average value of the lower bound is $\approx 0.230$.}
    \label{fig:histogram}
\end{figure}

\section{Lower bound from the controllable entanglement}

In this section we discuss our second lower bound technique, the controllable entanglement. 
One motivation for considering this second technique is that the controllable correlation does not seem to produce tight lower bounds. 
For instance, it gives a lower bound of $1/2$ for the CNOT gate, but the best upper bound known is $1$.

\subsection{Proof overview}

To state the lower bound, we introduce the $\lambda$-\emph{controllable entanglement}.
The definition of the controllable entanglement makes use of the entanglement of formation, $E_f$, which is reviewed in section \ref{sec:entropyreview}.
\begin{definition} Consider a unitary $U_{AB}$. We say that $U_{AB}$ has $\lambda$-\textbf{controllable entanglement} if there exists states $\phi^1_B$, $\phi^2_B$ such that
\begin{align}
    \rho^1_{QAB}=U_{AB}(\Psi^+_{QA}\otimes \phi^1_B)U^\dagger_{AB},\qquad
    \rho^2_{QAB}=U_{AB}(\Psi^+_{QA}\otimes \phi^2_B)U^\dagger_{AB}.
\end{align}
with 
\begin{align}
    \lambda:=E_f(Q:A)_{\Psi_1}>0, \quad 0=E_f(Q:A)_{\rho^2}.
\end{align}
If there are no such choices of input state, we say that $U_{AB}$ is not controllably entangled. 
\end{definition}
In words, the controllable entanglement considers the following setting: We entangle one input, $A$, to the channel with a reference, $Q$. 
We consider varying the choices of input on $B$ to try and either make $Q\!:\!A$ very entangled or exactly separable. 

When $\lambda>0$, the controllable entanglement gives a lower bound on entanglement cost in NLQC. 
More precisely, we prove the following.
\begin{theorem}\label{thm:mainlowerbound}
Suppose that an NLQC protocol using resource state $\Psi_{LR}$ implements a unitary $U_{AB}$ to within diamond norm distance $\gamma$, and where $U_{AB}$ has $\lambda$-controllable entanglement.
Then for small enough $\gamma$, we have that
\begin{align}
    E_f(L:R)_\Psi \geq \lambda-P(d_{AB}\sqrt{2\gamma})
\end{align}
where $P(x)=2n_{Q}x^{1/8}$. 
\end{theorem}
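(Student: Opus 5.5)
Let me first sketch the exact, pure-resource case, then add the error terms. Take a pure resource $\ket{\Psi}_{LR}$ and let the protocol implement $U_{AB}$ exactly. Feed in $\Psi^+_{QA}\otimes\phi^2_B$. By assumption, the output $\rho^2_{QA}$ is separable. Work backwards from the output. Take a Stinespring dilation $W^L_{M_1M_2\to AE}$ of Alice's second-round channel. As in the proof of Theorem \ref{thm:controllablecorrelation}, $\rho^2_{QAB}$ is pure, so its purification splits off $E$ as $\kappa_E$. Hence $\sigma^2_{QM_1M_2}=(W^L)^\dagger(\rho^2_{QA}\otimes\kappa_E)W^L$, which is separable across $Q:M_1M_2$. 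Since $Q$ is untouched, any separable decomposition of $\rho^2_{QA}$ pulls back to one on $Q:M_1M_2$. Tracing out $M_2$, the state $\sigma_{QM_1}$ is separable. Moreover, $\sigma_{QM_1}$ does not depend on the input on $B$ (causality), so $\sigma^1_{QM_1}$ is also separable. This gives $E_f(Q:M_1)_{\sigma^1}=0$.

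Next use input $\phi^1_B$. Alice's second-round map is local on $M_1M_2\to A$, so Lemma \ref{lemma:Efdataprocessing} gives
\[
\lambda=E_f(Q:A)_{\rho^1}\le E_f(Q:M_1M_2)_{\sigma^1}.
\]
I now want to bound the right side by $E_f(Q:M_1)_{\sigma^1}$ plus a term controlled by the resource entanglement. Lemma \eqref{eq:SRcontinuity} only gives $+\log d_{M_2}$, which is useless because the message can be large. So I will use the structure of $M_2$ instead. We have $M_2=\mathcal{V}^R(RB)$, and $Q$ is correlated with $M_1M_2$ only through $A$ (on Alice's side) and through $L$ being entangled with $R$. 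The cleanest route is to note that $E_f(Q:M_1M_2)_{\sigma^1}\le E_f(QA L: RB)$-type quantities are not directly what we want. Instead, I will use a teleportation-style argument. Decompose Bob's first-round map as an isometry and view the joint state on $QM_1$ together with $M_2$. Since $\sigma_{QM_1}$ is separable and $M_2$ is obtained from $R$ (entangled with $L$) plus a product input $B$, the entanglement of $Q$ with $M_1M_2$ can exceed that with $M_1$ by at most the entanglement that $M_2$'s source $R$ carries with the rest, namely $S(R)_\Psi$. Concretely, $QM_1M_2$ is obtained from $Q(AL)\otimes R\otimes B$ by the local map $\mathcal{V}^L\otimes\mathcal{V}^R$. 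So $E_f(Q:M_1M_2)\le E_f(Q:ALRB)$ by monotonicity, which is trivial; to gain, I will instead consider the cut $QM_1:M_2$. I would try to prove the inequality
\[
E_f(Q:M_1M_2)\le E_f(Q:M_1)+E_f(QM_1:M_2),
\]
or a variant, using an optimal decomposition of $\sigma_{QM_1}$ into product states and extending it via the purification. Then $E_f(QM_1:M_2)\le E_f(QAL:RB)=S(R)_\Psi$ by local monotonicity. This is the main obstacle: $E_f$ is not known to obey such a chain inequality in general. If it fails, I will fall back on the structure here. $Q$ starts maximally entangled with $A$, and separability of $\sigma_{QM_1}$ can be used to write $\sigma_{QM_1M_2}$ as an ensemble of states in which $Q$ is pure conditioned on the branch. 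For each branch, the entanglement of $Q$ with $M_2$ is bounded by the entanglement $M_2$ can carry, measured through $S(R)$ after an Uhlmann-type argument.

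For the approximate case I would repeat each step with continuity. Lemma \ref{lemma:traceUhlmann} gives a $2\sqrt{\gamma}$-closeness of $\sigma^2_{QM_1M_2}$ to a separable state, so $\sigma_{QM_1}$ is only approximately separable. Lemma \ref{lemma:Efcontinuity} then converts this into $E_f(Q:M_1)\lesssim \eta\log d+H(\eta)$ with $\eta\sim\gamma^{1/4}$. Bounding $H(\eta)$ by a power crudely yields the $x^{1/8}$ form with prefactor $2n_Q$ "for small enough $\gamma$". Similarly, $E_f(Q:A)_{\sigma^1}\ge\lambda-$(continuity error). Finally, for a mixed resource I would take the optimal $E_f$ decomposition $\{p_i,\Psi_i\}$ and apply Lemma \ref{lemma:Uextreme} to get $\sum_ip_i\gamma_i\le d_{AB}\sqrt{2\gamma}$. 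I would apply the pure-state bound to each branch and use concavity of $x\mapsto x^{1/8}$ to get $E_f(L:R)_\Psi\ge\lambda-P(d_{AB}\sqrt{2\gamma})$.
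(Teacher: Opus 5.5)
\textbf{There is a genuine gap at the central step of your proposal.} The inequality $E_f(Q:M_1M_2)\le E_f(Q:M_1)+E_f(QM_1:M_2)$ is false, and it fails in exactly the configurations an NLQC first round can produce.

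Here is a counterexample. Take two Bell pairs as the resource, so $S(R)=2$. Alice and Bob each measure their halves ($L$ and $R$) in the computational basis, obtaining a shared key $j\in\{0,1\}^2$. Alice applies the Pauli $P_j$ to $A$ and sets $M_1=A$; Bob sets $M_2=j$. Then
$\sigma_{QM_1M_2}=\frac14\sum_j (I\otimes P_j)\Psi^+_{QM_1}(I\otimes P_j)^\dagger\otimes\ketbra{j}{j}$.
This gives:
$E_f(Q:M_1)=0$, since $\sigma_{QM_1}=I/4$;
$E_f(QM_1:M_2)=0$, since the state is separable across that cut;
but $E_f(Q:M_1M_2)=1$, since a controlled $P_j^\dagger$ on the $M_1M_2$ side recovers $\Psi^+$.

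So the true bound $E_f(QM_1:M_2)\le S(R)$ does not help you. What $M_2$ contributes here is a classical key, and $E_f(QM_1:M_2)$ does not see it.

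Your fallback cannot work as stated. An ensemble for $\sigma_{QM_1M_2}$ in which $Q$ is pure on every branch makes every branch a product $Q\otimes M_1M_2$. That would make $\sigma_{QM_1M_2}$ separable, contradicting $E_f(Q:M_1M_2)\ge\lambda>0$. A product decomposition of $\sigma_{QM_1}$ does not lift to $\sigma_{QM_1M_2}$: it corresponds to a measurement on a purifier that contains $M_2$. Hence no entropy bound is established. Your continuity and mixed-resource steps (lemma \ref{lemma:Uextreme} plus concavity), which do match the paper's last step, therefore have nothing to act on.

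\textbf{How the paper proceeds, and a caveat about it.} The paper never attempts a one-shot entropy bound.
In lemma \ref{lemma:dimensionlowerbound} it pulls $M_2$ back to $RB$ by data processing and drops the product system $B$. It then uses \eqref{eq:SRcontinuity} to pay $n_R$, rather than the $\log d_{M_2}$ you rejected. This gives $n_R\gtrsim m\lambda$ for $U^{\otimes m}$.
It then converts dimension into entropy, which is the idea your proposal lacks. It runs $m$ copies of the protocol on a Schumacher-compressed $\ket{\psi}^{\otimes m}$, so $n_M\approx mS(R)$ must exceed roughly $m\lambda$.
It uses the finite-blocklength version with $\epsilon=\gamma$ and $m=\gamma^{-1/2}$. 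This choice, not single-shot continuity, is where the exponent $1/8$ comes from.

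However, \eqref{eq:SRcontinuity} is itself false in general, because $E_f$ is lockable. Let $\mathcal{F}$ be the $d$-dimensional Fourier matrix and $\tau=\frac1d\sum_m\ketbra{v_m}{v_m}$, with
$\ket{v_m}=\frac{1}{\sqrt2}\bigl(\ket{00}_{A'B'}\ket{mm}_{AB}+\ket{11}_{A'B'}\sum_i\mathcal{F}_{mi}\,\mathcal{F}\ket{i}_A\mathcal{F}^*\ket{i}_B\bigr)$.
Every vector $\sum_m c_m\ket{v_m}$ in its support has entropy $1+\frac12(H_1+H_2)$ on $BB'$, where $H_1,H_2$ are the Shannon entropies of $|c|^2$ and $|\mathcal{F}c|^2$. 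By Maassen--Uffink this is at least $1+\frac12\log d$. Hence $E_f(AA':BB')_\tau=1+\frac12\log d$, yet $\tr_{A'}\tau$ is separable.

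A state of this form can even arise as $\sigma_{QM_1R}$ from a legitimate first round, up to a classical flag in $M_1$, with $R$ playing $A'$ and $Q$ playing $BB'$. Alice copies part of $A$ in the basis selected by $L$, then entanglement-swaps $L$ with a qubit of $A$.

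So the paper's dimension step runs into the same obstacle you flagged. Importing it would not repair your argument without a non-lockable quantity or additional structure.
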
 
As an example application, consider the $\CNOT_{A\rightarrow B}$ gate. 
Taking $\ket{\phi^1}=\ket{+}$, we see that 
\begin{align}
    \ket{\Psi_1}=\CNOT_{AB}\ket{\Psi^+}_{QA}\ket{+}_B = \ket{\Psi^+}_{QA}\ket{+}_B
\end{align}
which leads to $\lambda=1$. 
Meanwhile, choosing $\ket{\phi^2}=\ket{0}$, we find
\begin{align}
    \ket{\Psi_2}=\CNOT_{AB}\ket{\Psi^+}_{QA}\ket{0}_B = \ket{GHZ}_{QAB}
\end{align}
This leaves $Q\!:\!A$ in a separable state, as needed. 
Thus the theorem above gives a lower bound of exactly 1 for the CNOT gate. 

In the remainder of this section we briefly outline the proof strategy for theorem \ref{thm:mainlowerbound}, and highlight how the subsections below correspond to steps in the proof. 
Refer to figure \ref{fig:NLQC_correlation} for the system labels we use. 

The first step of the proof is given in section \ref{sec:dimensionbound}. 
The reasoning begins with the functionality of the NLQC protocol, which implements the unitary $U_{AB}$ of interest. 
We consider the action of this unitary when $A$ is maximally entangled with a reference $Q$, and when we take two different choices of input to $B$, call them $\phi^1_B$ and $\phi^2_B$. 
We assume that the unitary is controllably entangled, so that the $Q\!:\!A$ system should is entangled when given $\phi^1_B$ and separable when given $\phi^2_B$. 
This gives statements about entanglement in the outputs, which we then need to translate to statements about entanglement in the resource. 
To do so, we use the data processing inequality and the causal structure of the circuit to understand the entanglement pattern in the mid-protocol state (the state after the first set of operations in figure \ref{fig:NLQC_correlation}, but before the second). 
From this, we find that the $Q:M_1$ wires must be unentangled, but $Q:M_1M_2$ must be strongly entangled when given input $\phi^1_B$. 
This then means that $R$ must have been large enough to carry the missing entanglement, and we obtain a lower bound on the \emph{dimension} of $R$, and in particular that $\log d_R\gtrsim \lambda$.

In section \ref{sec:purestatelowerbound}, we then continue by showing that in fact the entropy of $R$ must also be large. 
To do this, we introduce the parallel repetition of the protocol implementing $U_{AB}$. 
Repeating this $m$ times, our bound on dimension says that $R$ needs to consist of at least $m\lambda$ qubits. 
But if the entropy of $R$ is smaller than $\lambda$, we can compress it using Schumacher's compression scheme and take the compressed state as a resource for the implementing $U^{\otimes m}$. 
To avoid violating our dimension lower bound then, the resource system must not be compressible to fewer than $\sim m\lambda$ qubits, so we find that the \emph{entropy} of $R$ must be at least $\lambda$. 

The entropy lower bound is a good entanglement lower bound when the resource system is pure. 
To address the most general, mixed state, case, we upgrade this to a bound on the entanglement of formation. 
This is done in section \ref{sec:entanglementbound}. 
The key observation is that if the NLQC protocol works well on average, the entropy lower bound must apply to every state in any ensemble decomposition of the mixed state resource, and so every state in the ensemble decomposition must have a large entropy. 
But the entanglement of formation is defined in exactly this way: it is the minimal average entropy of pure states in any ensemble decomposition of $\Psi$. 
Since the entropy is bounded below by $\lambda$ for every term in the decomposition, we are led to the same lower bound for the entanglement of formation.  

\subsection{Proof of the lower bound}

\subsubsection{Dimension lower bound}\label{sec:dimensionbound}

As our first step towards a proof of theorem \ref{thm:mainlowerbound}, we prove the following lemma which lower bounds the dimension of the resource system.  
\begin{lemma}\label{lemma:dimensionlowerbound}
    Suppose $U_{AB}$ has $\lambda$-controllable entanglement. 
    A non-local quantum computation which implements $U_{AB}^{\otimes m}$ to within diamond norm $\epsilon$ and uses a resource system $\Psi_{LR}$ must have 
    \begin{align}
        n_R \geq m(\lambda-g_1(\epsilon)) -  g_2(\epsilon)
    \end{align}
    where $g_1(\epsilon),g_2(\epsilon)$ are positive functions that go to $0$ as $\epsilon\rightarrow 0$. More specifically, they are given by
    \begin{align}
        g_1(\epsilon) &= n_Q(\eta_{\epsilon} + \eta_{2\sqrt{\epsilon}}), \nonumber \\
        g_2(\epsilon) &= H(\eta_\epsilon) + H(\eta_{2\sqrt{\epsilon}}),
    \end{align}
    where $\eta_x=2\sqrt{x(1-x)}$ and $H(y)=(1+y)h\left(\frac{y}{1+y}\right)$.
\end{lemma}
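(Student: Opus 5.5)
We follow the same causal-structure logic as in the proof of theorem \ref{thm:controllablecorrelation}, but track entanglement of formation rather than mutual information. We use the $m$-fold parallel setting throughout. The reference is $Q=Q_1\cdots Q_m$, maximally entangled with $A=A_1\cdots A_m$ via $(\Psi^+_{QA})^{\otimes m}$, so that $E_f$ continuity enters with $\log d = m n_Q$. The two inputs on $B$ are $(\phi^1_B)^{\otimes m}$ and $(\phi^2_B)^{\otimes m}$. The ideal outputs are $(\rho^1)^{\otimes m}$ and $(\rho^2)^{\otimes m}$. The second of these is separable on $Q\!:\!A$, since a tensor product of separable states is separable. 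For the first, we need $E_f(Q\!:\!A)_{(\rho^1)^{\otimes m}}\geq m\lambda$. Because $E_f$ is not known to be additive in general, I would instead bound $E_f$ of the $m$ copies from below by using the pure maximally entangled structure: since $\Psi^+_{QA}$ is pure, $E_f$ of the reduced state can be handled via the $QAB$ purification. This additivity step is the one I would flag for checking. If needed, I would replace it by an argument applied to each copy separately through monotonicity under tracing out the other copies.

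\textbf{Step 1 ($Q\!:\!M_1$ is nearly separable).} Dilate $\mathcal{W}^L$ to an isometry $W^L_{M_1M_2\to AE}$. The ideal state on input 2 restricted to $QAB$ is pure, so exactly as in the proof of theorem \ref{thm:controllablecorrelation}, lemma \ref{lemma:traceUhlmann} gives $\Vert \sigma^2_{QM_1M_2} - (W^L)^\dagger(\rho^{2}_{QA}\otimes\kappa_E)W^L\Vert_1\leq 2\sqrt{\epsilon}$. The comparison state is obtained from the separable state $\rho^2_{QA}\otimes \kappa_E$ by a local map on the $AE$ side, so its $Q\!:\!M_1$ marginal is separable. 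By lemma \ref{lemma:Efcontinuity}, $E_f(Q\!:\!M_1)_{\sigma^2}\leq \eta_{2\sqrt\epsilon}\, m n_Q + H(\eta_{2\sqrt\epsilon})$. The state on $QM_1$ is causally independent of the input on $B$, so the same bound holds for $\sigma^1$.

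\textbf{Step 2 ($Q\!:\!M_1M_2$ is highly entangled).} On input 1, lemma \ref{lemma:Efcontinuity} together with $\Vert\sigma^1_{QA}-\rho^1_{QA}\Vert_1\leq\epsilon$ gives $E_f(Q\!:\!A)_{\sigma^1}\geq m\lambda - \eta_\epsilon m n_Q - H(\eta_\epsilon)$. Lemma \ref{lemma:Efdataprocessing}, applied to the channel $\mathcal{W}^L$, then yields $E_f(Q\!:\!M_1M_2)_{\sigma^1}\geq E_f(Q\!:\!A)_{\sigma^1}$.

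\textbf{Step 3 (combine).} The inequality \eqref{eq:SRcontinuity} gives $E_f(Q\!:\!M_1M_2)\leq E_f(Q\!:\!M_1)+\log d_{M_2}$. This alone bounds $n_{M_2}$ rather than $n_R$, so I need to refine it. The system $M_2=\mathcal{V}^R(RB)$, and $B$ is in a product state $(\phi^1)^{\otimes m}$ uncorrelated with $QAL$. I would therefore instead apply \eqref{eq:SRcontinuity} one step earlier, to the pre-$\mathcal{V}$ state. $E_f(Q\!:\!M_1M_2)_{\sigma^1}$ is at most $E_f(Q\!:\!M_1 R B)$ of the state $\mathcal{V}^L_{AL\to M_1}(P\otimes\Psi\otimes\phi^1)$, by lemma \ref{lemma:Efdataprocessing} applied to $\mathcal{V}^R$. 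Because $B$ is a product ancilla on the non-$Q$ side, it can be prepared locally, and $E_f$ is unchanged by appending it. Then \eqref{eq:SRcontinuity} gives a bound of $E_f(Q\!:\!M_1)+n_R$. Here $E_f(Q\!:\!M_1)$ is computed in the state obtained from $\mathcal{V}^L$, which is exactly the $QM_1$ marginal controlled in Step 1. Chaining the three steps gives
$n_R\geq m\lambda - m n_Q(\eta_\epsilon+\eta_{2\sqrt\epsilon}) - H(\eta_\epsilon)-H(\eta_{2\sqrt\epsilon})$, which is the claimed bound with the stated $g_1,g_2$.

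The main obstacle I anticipate is the $m\lambda$ lower bound on $E_f(Q\!:\!A)_{(\rho^1)^{\otimes m}}$, given that $E_f$ is not additive. My first attempt would be to define the controllable-entanglement quantity through an additive proxy: for instance, when $\rho^1_{QA}$ is pure, as in the CNOT example, $E_f$ is just the entropy and is additive. If that fails, I would restrict the statement accordingly.
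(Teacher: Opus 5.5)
Your Steps 1--3 follow the paper's proof step for step, so the two problems below affect both arguments.

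\textbf{Step 3 fails.} The inequality \eqref{eq:SRcontinuity}, $E_f(A\!:\!BC)\le E_f(A\!:\!B)+\log d_C$, is false in general, because entanglement of formation is lockable (Horodecki, Horodecki, Horodecki and Oppenheim). Its one-line justification (take an optimal decomposition, then use subadditivity and $S(C)\le\log d_C$) only proves $E_f(A\!:\!BC)\le E_f(AC\!:\!B)+\log d_C$. That moves $C$ across the cut; it does not adjoin $C$ to one side.

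It fails in exactly your configuration:
\begin{itemize}
\item Let $\Psi_{LR}$ be one EPR pair, so $n_R=1$, and let $Q$ be maximally entangled with an $n$-dimensional $A$.
\item Let $\mathcal{V}^L$ have Stinespring isometry $\ket{t}_A\ket{k}_L\mapsto\ket{t,k}_{M_1}\otimes U_k\ket{t}_G$, with $U_0=I$ and $U_1$ the Fourier transform.
\item Let Bob forward $R$ as $M_2$.
\end{itemize}
Then $\sigma_{QM_1}=\frac{1}{2n}\sum_{t,k}\ketbra{t}{t}_Q\otimes\ketbra{t,k}{t,k}_{M_1}$ is separable. The state on $QM_1M_2G$ is pure, so the Koashi--Winter relation gives $E_f(Q\!:\!M_1M_2)=\log n-J^{\leftarrow}(Q|G)$. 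Here $J^{\leftarrow}(Q|G)$ is the largest Holevo information about $Q$ obtainable by measuring $G$. Since $\rho_{QG}=\frac{1}{2n}\sum_{t,k}\ketbra{t}{t}\otimes U_k\ketbra{t}{t}U_k^\dagger$, this is at most the accessible information of the uniform ensemble $\{U_k\ket{t}\}_{t,k}$. By the DiVincenzo et al.\ locking bound (via Maassen--Uffink), that accessible information is $\frac12\log n$.

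So $E_f(Q\!:\!M_1M_2)\ge\frac12\log n$, while $E_f(Q\!:\!M_1)=0$ and $n_R=1$. If Bob instead measures $R$ in the computational basis, $Q\!:\!M_1M_2$ becomes separable. Thus everything Steps 1 and 2 extract from the mid-protocol state is compatible with a single EPR pair.

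This does not refute the lemma. It does show that this route cannot prove it without new input, for example using the requirement that $\mathcal{W}^L$ actually outputs the target state. Replacing $E_f$ by a non-lockable quantity such as coherent information costs a factor of $2$ here, since adjoining $R$ raises it by $I(Q\!:\!R|M_1)\le 2n_R$.

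\textbf{The additivity step.} The point you flagged is also a genuine gap, and your fallbacks do not close it:
\begin{itemize}
\item $E_f$ is only subadditive, $E_f(\rho^{\otimes m})\le mE_f(\rho)$, which is the wrong direction.
\item Tracing out all but one copy yields only $\lambda$, not $m\lambda$.
\item Purity of $\rho^1_{QAB}$ implies nothing about additivity of $E_f$ for its $QA$ marginal.
\end{itemize}
The paper does not address this either; it asserts $E_f(Q\!:\!A)_{\rho^1}=\lambda m$ ``by the definition of the controllable entanglement''.

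Unlike Step 3, this part is repairable. The map $k\mapsto E_f((\rho^1_{QA})^{\otimes k})$ is subadditive, so Fekete's lemma gives $E_f((\rho^1_{QA})^{\otimes m})\ge m\,E_c(\rho^1_{QA})$ for every $m$. Here $E_c(\rho)=\lim_k E_f(\rho^{\otimes k})/k$ is the entanglement cost. You can therefore run the argument with $E_c(\rho^1_{QA})$ in place of $\lambda$. Nothing is lost when $\rho^1_{QA}$ is pure, as for CNOT, where it is $\Psi^+_{QA}$.
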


\begin{proof} Suppose we have an NLQC protocol that implements $\tilde{U}$, with $\Vert\tilde{U}-U^{\otimes m}\Vert_\diamond\leq \epsilon$ using a resource state $\Psi_{LR}$. 
We take as input on the left one end of the maximally entangled state $\ket{\Psi^+}^{\otimes m}_{QA}$, and label the reference system by $Q$. 
The set up is shown in figure \ref{fig:NLQC_correlation}; we will use the operation and system labels shown there. 

We consider two scenarios. 
First, consider inputting the $\ket{\phi^1}^{\otimes m}$ state into the remaining input, labelled $B$. 
We label the state resulting from a perfect implementation of $U^{\otimes m}$ on this input as $\rho^1$, and from the imperfect implementation as $\sigma^1$. 
In this case, a perfect implementation of $U^{\otimes m}$ would lead to, by the definition of the controllable entanglement, 
\begin{align}
    E_f(Q:A)_{\rho^1} = \lambda m.
\end{align}
Since the protocol instead implements $\tilde{U}$ which is close to $U^{\otimes m}$, we need to use the continuity bound of lemma \ref{lemma:Efcontinuity} and we obtain
\begin{align}
    E_f(Q:A)_{\rho^1} - E_f(Q:A)_{\sigma^1} \leq m \,n_Q\,\eta_\epsilon + H(\eta_\epsilon).
\end{align}
Using $E_f(Q:A)_{\rho^1} = \lambda m$, we have then that the $A$ output from $U_{AB}$ is close to being $\lambda m$ entangled with $R$, 
\begin{align}
    E_f(Q:A)_{\sigma^1} \geq m(\lambda-n_Q\eta_\epsilon)-H(\eta_\epsilon).
\end{align}
We can also observe that, 
\begin{align}\label{eq:correct}
    E_f(Q:M_1M_2)_{\sigma^1}\geq m(\lambda-n_Q\eta_\epsilon)-H(\eta_\epsilon)
\end{align}
which follows from the previous line and the data processing inequality (lemma \ref{lemma:Efdataprocessing}) for the entanglement of formation. 

Second, we input $\ket{\phi^2}^{\otimes m}$ into $B$. 
We label the state created in this case by $\rho^2$ in the perfect case, and as $\sigma^2$ in the imperfect case.
Recall that $\rho^2_{QA}$ is separable.
Consider purifying the operation $\mathcal{W}^L_{M_1M_2\rightarrow A}$ to a unitary $W^L_{M_1M_2\rightarrow AE}$.
This produces a state $\sigma_{QAE}$. 
By lemma \ref{lemma:traceUhlmann}, we can extend the $\sigma^2,\rho^2$ states to the $QABEX$ Hilbert space ($X$ is an additional purifying system) and obtain
\begin{align}
    \Vert \sigma^2_{QABEX} - \rho^2_{QABEX} \Vert_1 \leq 2\sqrt{\epsilon}.
\end{align}
Since $\rho^2_{QAB}$ is pure, the extension of $\rho^2$ must be product across $QAB:EX$, so
\begin{align}
    \Vert \sigma^2_{QABEX} - \rho^2_{QAB}\otimes \rho_{EX} \Vert_1 \leq 2\sqrt{\epsilon}.
\end{align}
Next trace out $BX$, 
\begin{align}
    \Vert \sigma_{QAE}^2 - \rho_{QA}^2\otimes \rho_{E} \Vert_1 \leq 2\sqrt{\epsilon}.
\end{align}
Now apply $(W^L)^\dagger$ to both states, which won't change the trace distance, and choose an explicit decomposition of $\rho^2$ into a convex sum over product states (recall that by assumption it is separable), 
\begin{align}
    2\sqrt{\epsilon} &\geq \left\Vert\sum_{i}p^i\rho^i_{Q}\otimes (W^L)^\dagger_{M_1M_2\rightarrow AE}\rho^i_{A}\otimes \rho_E W^L_{M_1M_2\rightarrow AE} - (W^L)^\dagger_{M_1M_2\rightarrow AE}\sigma^2_{QA}\otimes \rho_EW^L_{M_1M_2\rightarrow AE}\right\Vert_1 \nonumber \\
    &=\left\Vert\sum_{i}p^i\rho^i_{Q}\otimes \rho^i_{M_1M_2} - \sigma^2_{QM_1M_2}\right\Vert_1
\end{align}
From this we also obtain that the state on $\sigma^2_{QM_1}$ is close to separable, which using continuity of $E_f$ gives, 
\begin{align}
    E_f(Q:M_1)_{\sigma^2} \leq m n_{Q}\eta_{2\sqrt{\epsilon}} + H(\eta_{2\sqrt{\epsilon}}).
\end{align}
Finally, notice that by causality the state on $QM_1$ must be the same regardless of the input at $B$, so that
\begin{align}\label{eq:smallonQM_1}
    E_f(Q:M_1)_{\sigma^1} \leq m\,n_{Q}\eta_{2\sqrt{\epsilon}} + H(\eta_{2\sqrt{\epsilon}}).
\end{align}
In words, we see that in the state $\sigma^1$ systems $Q:M_1$ are close to separable, while $Q:M_1M_2$ is entangled.

Now we combine our statements so far to show that this can only occur when $R$ is large enough,
\begin{align}
    m(\lambda-n_Q\eta_\epsilon)-H(\eta_\epsilon) &\leq E_f(Q:M_1M_2)_{\sigma^1} \qquad \qquad \,\,\,\,\,\,\,\,\,\text{From eq. \ref{eq:correct}}\nonumber \\
       &\leq E_f(Q:M_1RB')_{\sigma^1} \qquad \qquad \,\,\,\text{From data processing}\nonumber \\
       &= E_f(Q:M_1R)_{\sigma^1} \nonumber \qquad\qquad \,\,\,\,\,\,\,\,\,\,\text{Because $B'$ is tensor product} \\
       &\leq E_f(Q:M_1)_{\sigma^1} + n_R \nonumber \quad\,\,\,\,\,\,\qquad \text{From eq. \ref{eq:SRcontinuity}}\\
       &=m\,n_Q\eta_{2\sqrt{\epsilon}} + H(\eta_{2\sqrt{\epsilon}}) + n_R, \,\quad \,\,\,\, \text{From eq. \ref{eq:smallonQM_1}.}
\end{align}
so we have that 
\begin{align}
    n_R  \geq m(\lambda_1-n_Q\eta_\epsilon - n_Q\eta_{{2\sqrt{\epsilon}}} ) -H(\eta_\epsilon) - H(\eta_{2\sqrt{\epsilon}})
\end{align}
as claimed. 
\end{proof}

\subsubsection{Entropy lower bound for any pure state resource}\label{sec:purestatelowerbound}

In section \ref{sec:dimensionbound} we gave a lower bound on the number of qubits of resource system needed in an NLQC implementing a unitary $U_{AB}$ with the controllable entanglement property. 
In this section we translate this into a bound on the entanglement in the resource system, under the assumption that the resource system is pure, which we quantify using the entropy of one side of the resource state. 
We treat this first in the case where the implementation of $U$ is exact, then in the case where the implementation of $U$ is approximate. 
Our treatment of the approximate setting contains as a special case the exact one so in principle the exact case could be omitted, but the exact case is significantly simpler than the approximate one and conveys the key elements of the proof, so we retain it. 

\vspace{0.2cm}
\noindent \textbf{Exact case:} To upgrade our dimension lower bound to a lower bound on entropy, our approach is to make use of Schumacher compression \cite{schumacher1995quantum}, stated in the next theorem.

\begin{theorem}\label{thm:schumacher} \textbf{(Schumacher compression)}
    Suppose we have a quantum source which produces $\ket{\psi}_{LR}^{\otimes m}$. 
    Then, for all $\epsilon, \delta\in(0,1)$, there is a large enough $m$ such that there exists a compression map $\mathcal{C}_{R^m\rightarrow M}$ and decompression map $\mathcal{D}_{M\rightarrow R^m}$ with
    \begin{align}
        \Vert\ketbra{\psi}{\psi}^{\otimes m} - \mathcal{D}_{M\rightarrow R^m}\circ \mathcal{C}_{R^m\rightarrow M}(\ketbra{\psi}{\psi}^{\otimes m}) \Vert_1 \leq \epsilon
    \end{align}
    and where $\log d_M\leq (S(R)+\delta)m$.
\end{theorem}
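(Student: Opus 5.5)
This is the standard Schumacher compression theorem, so the plan is to follow the typical-subspace argument. Write the Schmidt decomposition $\ket{\psi}_{LR}=\sum_x \sqrt{p(x)}\ket{x}_L\ket{x}_R$, so that $\psi_R=\sum_x p(x)\ketbra{x}{x}_R$ and $S(R)=H(p)$. Then $\ket{\psi}^{\otimes m}=\sum_{x^m}\sqrt{p(x^m)}\ket{x^m}_{L^m}\ket{x^m}_{R^m}$ with $p(x^m)=\prod_j p(x_j)$.

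For $\delta>0$ define the typical set
\begin{align}
    T_\delta^m=\left\{x^m : \left|-\tfrac{1}{m}\log p(x^m)-H(p)\right|\leq \delta\right\},
\end{align}
and let $\Pi_R$ be the projector onto $\mathrm{span}\{\ket{x^m}_{R^m}: x^m\in T^m_\delta\}$. Two standard facts are needed.
\begin{itemize}
    \item By the weak law of large numbers applied to the i.i.d. variables $-\log p(x_j)$, we have $\Pr[x^m\in T^m_\delta]\geq 1-\epsilon'$ for $m$ large enough, with $\epsilon'$ arbitrary.
    \item Since each typical sequence has probability at least $2^{-m(H+\delta)}$ and the total probability is at most one, $|T^m_\delta|\leq 2^{m(H+\delta)}$.
\end{itemize}

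Take $M$ to be a system of dimension $|T^m_\delta|+1$, which satisfies $\log d_M\leq m(S(R)+\delta)$ up to an additive $1$ that is absorbed by shrinking $\delta$. The maps are:
\begin{itemize}
    \item \textbf{Compression} $\mathcal{C}$: measure $\{\Pi_R,1-\Pi_R\}$. On the $\Pi_R$ outcome, apply an isometry from the typical subspace into $M$. On the other outcome, output a fixed state of $M$ (or, to save the extra dimension, map to a fixed typical basis state).
    \item \textbf{Decompression} $\mathcal{D}$: the inverse isometry, extended arbitrarily outside its range.
\end{itemize}
Then $\mathcal{D}\circ\mathcal{C}(\rho)=\Pi\rho\Pi+\tau\,\mathrm{tr}[(1-\Pi)\rho]$ for some state $\tau$.

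For the error bound, the post-measurement unnormalized state is $\Pi_R\ket{\psi}^{\otimes m}=\sum_{x^m\in T}\sqrt{p(x^m)}\ket{x^m}\ket{x^m}$. Its squared overlap with $\ket{\psi}^{\otimes m}$ is $\Pr[T]^2\geq(1-\epsilon')^2$. The gentle measurement lemma, or a direct computation via Fuchs--Van de Graaf \eqref{eq:FVdG}, gives
\begin{align}
    \left\Vert\psi^{\otimes m}-\Pi\psi^{\otimes m}\Pi\right\Vert_1\leq 2\sqrt{\epsilon'}.
\end{align}
The residual term $\tau\,\mathrm{tr}[(1-\Pi)\psi^{\otimes m}]$ contributes at most $\epsilon'$ in one-norm. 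Choosing $\epsilon'$ with $2\sqrt{\epsilon'}+\epsilon'\leq\epsilon$, and then $m$ large enough, gives the claim.

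The only step requiring real care is the typicality probability estimate. Since the theorem is qualitative (existence of some large $m$), the weak law suffices and no rate is needed, so I expect no genuine obstacle. The remaining work is the routine bookkeeping of the extra dimension and the error terms.
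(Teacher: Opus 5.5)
Your typical-subspace argument is correct and is the standard proof of this result; the paper does not prove the statement itself but cites it from \cite{schumacher1995quantum}, whose argument is essentially the one you give. One cosmetic fix: since $\mathcal{C}$ and $\mathcal{D}$ act on $R^m$ alone, the residual term should read $\tau_{R^m}\otimes\mathrm{tr}_{R^m}[(1-\Pi_R)\psi^{\otimes m}]$, which still has one-norm at most $\epsilon'$. Separately, going via Fuchs--Van de Graaf instead of gentle measurement costs an extra $\epsilon'$ from normalizing $\Pi\psi^{\otimes m}\Pi$, which is harmless.
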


We use this along with our lower bound on dimension, lemma \ref{lemma:dimensionlowerbound}, to obtain a lower bound on entropy. 
The basic idea is that the resource system $\ket{\psi}_{LR}$ can be compressed to contain $S(R)m$ qubits using Schumacher compression, but from our dimension bound we know the number of qubits must be $\lambda_1 m$, so we must have $S(R)\geq \lambda_1$. 
We give a more careful proof next. 

\begin{lemma}\label{lemma:exactcaseentropylowerbound}
    Suppose that an NLQC protocol implements $U_{AB}$ exactly, using a pure resource state $\ket{\psi}_{LR}$, and where $U_{AB}$ has $\lambda$-controllable entanglement. Then, $S(R)\geq\lambda$.
\end{lemma}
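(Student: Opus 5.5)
The plan is to argue by contradiction. We combine Schumacher compression (theorem \ref{thm:schumacher}) with the dimension lower bound of lemma \ref{lemma:dimensionlowerbound}, applied to the $m$-fold parallel repetition of the protocol. Suppose $S(R)_\psi < \lambda$, and fix $\delta>0$ with $S(R)+2\delta<\lambda$.

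First I build a protocol for $U_{AB}^{\otimes m}$. Running the given exact protocol $m$ times in parallel, using resource $\ket{\psi}_{LR}^{\otimes m}$, implements $U_{AB}^{\otimes m}$ exactly. Now replace the resource by the compressed state $\Psi'_{LM}=(\mathcal{I}_{L^m}\otimes\mathcal{C}_{R^m\to M})(\psi^{\otimes m})$. Bob's first-round operation is modified to apply $\mathcal{D}_{M\to R^m}$ to $M$ before running the original $m$-fold $\mathcal{V}^R$. This is a legitimate NLQC protocol: it uses resource $\Psi'_{LM}$, and Bob's local system has $n_M\le m(S(R)+\delta)$ qubits. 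The compression is applied by whoever prepares the resource, before the protocol starts, so no locality constraint is violated.

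Next I bound the error. The channel implemented by the new protocol is the original $m$-fold protocol run on the resource $\mathcal{D}\circ\mathcal{C}(\psi^{\otimes m})$ instead of $\psi^{\otimes m}$. Consider any input state $\omega$ on $AB$ together with a reference. The two output states are obtained by applying the same fixed channel to $\omega\otimes\psi^{\otimes m}$ and to $\omega\otimes\mathcal{D}\circ\mathcal{C}(\psi^{\otimes m})$. Monotonicity of the one-norm under channels then gives diamond distance at most $\epsilon_m\equiv\Vert\psi^{\otimes m}-\mathcal{D}\circ\mathcal{C}(\psi^{\otimes m})\Vert_1$ from $U_{AB}^{\otimes m}$. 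By theorem \ref{thm:schumacher}, for any target $\epsilon$ we can take $m$ large so that $\epsilon_m\le\epsilon$.

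Now apply lemma \ref{lemma:dimensionlowerbound} with resource system $M$ in place of $R$. This gives
\begin{align}
m(S(R)+\delta)\ \ge\ n_M\ \ge\ m(\lambda-g_1(\epsilon))-g_2(\epsilon).
\end{align}
Dividing by $m$ yields $S(R)+\delta\ge\lambda-g_1(\epsilon)-g_2(\epsilon)/m$. Choose $\epsilon$ small enough that $g_1(\epsilon)<\delta/2$. Then take $m$ large enough that Schumacher compression achieves this $\epsilon$ and also $g_2(\epsilon)/m<\delta/2$. The result is $S(R)+2\delta>\lambda$, contradicting the choice of $\delta$. Equivalently, we could let $\delta\to0$ directly.

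The main obstacle is the order of quantifiers. The functions $g_1,g_2$ depend only on $\epsilon$, not on $m$, which is what makes the limits work. Schumacher compression, however, gives $m$ as a function of $(\epsilon,\delta)$. We must therefore fix $\delta$ and $\epsilon$ first and only then choose $m$, also ensuring that $g_2(\epsilon)/m$ is small. This is possible because $g_2(\epsilon)$ is a fixed finite number once $\epsilon$ is fixed, and theorem \ref{thm:schumacher} allows $m$ to be taken arbitrarily large. A minor point to check is that lemma \ref{lemma:dimensionlowerbound} applies to a mixed resource such as $\Psi'_{LM}$. It does, since its proof only uses data processing and the bound of eq.~\eqref{eq:SRcontinuity}, neither of which needs purity.
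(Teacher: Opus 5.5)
Your proposal is correct and follows essentially the same route as the paper: compress $\ket{\psi}^{\otimes m}$ with Schumacher's scheme, have Bob decompress in the first round to obtain a protocol for $U_{AB}^{\otimes m}$ that is $\epsilon$-close in diamond norm, then combine $n_M\le m(S(R)+\delta)$ with lemma \ref{lemma:dimensionlowerbound} and send $\epsilon,\delta\to 0$, $m\to\infty$. Your contradiction framing and the extra checks (monotonicity giving the diamond-norm bound, the order of quantifiers, and that the dimension lemma does not need a pure resource) only make explicit what the paper's direct limiting argument leaves implicit.
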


\begin{proof}
Using Schumacher compression (theorem \ref{thm:schumacher}), for any choice of $\epsilon, \delta>0$ there is an $m$ large enough, choice of compression channel $\mathcal{E}_{R^m\rightarrow M}$, and decompression channel $\mathcal{D}_{M\rightarrow R^m}$ with $n_M\leq (S(R)+\delta)m$, such that if we define
\begin{align}
    \Psi_{L^mM}=\mathcal{E}_{R^m\rightarrow M}(\ketbra{\psi}{\psi}^{\otimes m})
\end{align}
then
\begin{align}
    \left\Vert \ketbra{\psi}{\psi}^{\otimes m}_{LR} - \mathcal{D}_{M\rightarrow R^m}(\Psi_{L^mM})\right\Vert_1 \leq \epsilon.
\end{align}
We define an NLQC protocol to implement $U^{\otimes m}$ as follows. 
The distributed resource state is taken to be $\Psi_{L^mM}$. 
In the first set of operations, on the right, Bob applies $\mathcal{D}_{M\rightarrow R^m}$, leaving Alice and Bob sharing a state $\epsilon$-close to $\ket{\psi}^{\otimes m}$. 
Next, they run $m$ copies of the protocol, using the $m$ copies of $\ket{\psi}$ as resource states. 
By the properties of the diamond norm distance, this will be $\epsilon$-close in diamond norm to an implementation of $U^{\otimes m}$. 

Now we make use of lemma \ref{lemma:dimensionlowerbound}, which tells us that
\begin{align}
    n_M \geq m(\lambda-g_1(\epsilon)) - g_2(\epsilon)
\end{align}
But also, at large enough $m$, $(S(R)+\delta)m \geq n_M$, so that
\begin{align}\label{eq:SRlowerbound}
    S(R) \geq \lambda-\delta - g_1(\epsilon)-g_2(\epsilon)/m
\end{align}
But we can choose $\epsilon$, $\delta$ arbitrarily small while $m$ becomes arbitrarily large, so that we obtain $S(R)\geq \lambda$, as claimed.
\end{proof}

\vspace{0.2cm}
\noindent \textbf{Approximate case:} In the case where the NLQC protocol implements $U$ approximately, the asymptotic statement of Schumacher compression doesn't suffice to obtain a lower bound. 
The reason for this can be seen by considering equation \ref{eq:SRlowerbound}. 
There, the error $\epsilon$ in the implementation of $U^{\otimes m}$ comes from the approximation to the resource state appearing when decompressing from Schumacher's scheme. 
If each $U$ implementation is approximate, there is a contribution to the error from each $U$, so we would replace $\epsilon\rightarrow \epsilon + \gamma\,m$ where $\gamma$ is the error in a single $U$ implementation.
But $g(x)$ is only defined for $x\in[0,1]$, so as $m\rightarrow \infty$ we never have a lower bound and we do not obtain a bound in the asymptotic setting considered by Schumacher. 
To remedy this, we will need to consider Schumacher compression for a finite number of copies of the input state. 
This is addressed in \cite{abdelhadi2020second}; we briefly recall one of their results here. 

A compression protocol consists of a compression channel $\mathcal{C}_{A^n\rightarrow M}$ and a decompression channel $\mathcal{D}_{M\rightarrow A^n}$. 
We say the protocol is $\epsilon$-correct if the entanglement fidelity of the input $\rho^{\otimes n}$ is $\epsilon$-close to the entanglement fidelity of the output,
\begin{align}
    F_e(\rho_A^{\otimes n},\mathcal{D}\circ\mathcal{C}(\rho_A^{\otimes n})) \geq 1-\epsilon.
\end{align}
We denote the minimal log-dimension of $M$ needed to achieve $\epsilon$-correct compression on $n$ copies of $\rho_A$ by $M(n,\epsilon, \rho)$. 

The value of $M(n,\epsilon, \rho)$ is well understood. 
To state the result, we define the function on density matrices
\begin{align}
    V(A)_\rho = \tr(\rho_A\log^2\rho_A) - \left(\tr(\rho_A \log \rho_A)\right)^2
\end{align}
and the function
\begin{align}
    \Phi^{-1}(x) = \sup \left\{z\in \mathbb{R}:\frac{1}{\sqrt{2\pi}}\int_{-\infty}^{z} e^{-t^2/2} dt\leq x \right\}.
\end{align}
This is known as the \emph{quantile} of the normal distribution; it expresses how far we need to integrate the normal distribution with variance 1 to reach a given value $x$. 
The quantile of the normal distribution is defined on $(0,1)$ and diverges as $x\rightarrow 0,1$. 

Finally, we can state the following theorem, proven in \cite{abdelhadi2020second} as theorem 3. 
\begin{theorem}\label{thm:finite_n_schumacher}\textbf{(Schumacher compression at finite block length)}
The minimal achievable value of $M(n,\epsilon, \rho)$ in performing $\epsilon$-correct compression of the state $\rho_A$ is
\begin{align}
    M(n,\epsilon,\rho)= n S(A)_\rho + \Phi^{-1}(\sqrt{1-\epsilon})\sqrt{n V(A)_\rho} + O(\log n).
\end{align}
\end{theorem}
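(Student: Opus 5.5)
The plan is to reduce both directions to a one-shot statement about the eigenvalues of $\rho_A^{\otimes n}$. I would then evaluate that statement with a Berry--Esseen estimate for the i.i.d.\ random variable $-\log\lambda$, where $\lambda$ is drawn from the spectrum $\{\lambda_x\}$ of $\rho_A$. Write $\Pi_k$ for the projector onto the eigenvectors of $\rho^{\otimes n}$ with the $k$ largest eigenvalues, and $s_k=\tr(\Pi_k\rho^{\otimes n})$ for the sum of those eigenvalues. The target one-shot characterization is that the optimal entanglement fidelity achievable through a register of dimension $d_M$ is, up to negligible corrections, $s_{d_M}^2$. This is where the $\sqrt{1-\epsilon}$ in the quantile argument comes from. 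Concretely, the requirement $F_e\geq 1-\epsilon$ should become $s_{d_M}\gtrsim\sqrt{1-\epsilon}$.

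\textbf{Achievability.} Let $\mathcal{C}$ measure $\{\Pi_k,\mathbb{1}-\Pi_k\}$. On the outcome $\Pi_k$ it isometrically maps the support of $\Pi_k$ into $M$ with $\log d_M=\lceil\log k\rceil$; on the other outcome it outputs a fixed state. Let $\mathcal{D}$ be the inverse isometry. The Kraus operator $\Pi_k$ of $\mathcal{D}\circ\mathcal{C}$ alone contributes $|\tr(\rho^{\otimes n}\Pi_k)|^2=s_k^2$ to $F_e=\sum_j|\tr(\rho^{\otimes n}K_j)|^2$, so $F_e\geq s_k^2$. Next I would take $k$ to be the number of eigenvalue strings with $-\log\lambda_{x^n}\leq nS+z\sqrt{nV}$. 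These are the largest eigenvalues, so $\Pi_k$ is exactly the projector onto them. Then $s_k=\Pr[\sum_i -\log\lambda_{x_i}\leq nS+z\sqrt{nV}]$, and Berry--Esseen gives $s_k=\Phi(z)+O(n^{-1/2})$. The cardinality bound is $k\leq 2^{nS+z\sqrt{nV}}$, since each such string has probability at least $2^{-(nS+z\sqrt{nV})}$. Choosing $z=\Phi^{-1}(\sqrt{1-\epsilon})+O(n^{-1/2})$ absorbs the Berry--Esseen error into the $O(\log n)$ term, which gives the upper bound on $M(n,\epsilon,\rho)$.

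\textbf{Converse.} This is the step I expect to be the main obstacle. The entanglement fidelity is not linear in the channel, so the classical argument does not transfer verbatim. I would write $\mathcal{D}\circ\mathcal{C}$ with Kraus operators $K_{ij}=D_jC_i$, each of which factors through $M$ and so has rank at most $d_M$. The aim is to show $F_e=\sum_{ij}|\tr(\rho^{\otimes n}D_jC_i)|^2\leq s_{d_M}^2$, possibly up to a slack factor.

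My first attempt would be Cauchy--Schwarz in the form $|\tr(\rho^{1/2}D_j\,C_i\rho^{1/2})|^2\leq \tr(\rho^{1/2}D_jD_j^\dagger\rho^{1/2})\,\tr(\rho^{1/2}C_i^\dagger C_i\rho^{1/2})$. I would combine this with the observation that $D_jD_j^\dagger$ is supported on a $d_M$-dimensional space, so $\tr(\rho D_jD_j^\dagger)\le\|D_j\|^2 s_{d_M}$, and then use completeness relations to sum over $i,j$. If this loses a square and yields only $F_e\leq s_{d_M}$, I would instead combine the rank constraint on both $\mathcal{C}$ and $\mathcal{D}$ symmetrically, along the lines of the Datta--Leditzky analysis.

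A fallback would use a hypothesis-testing relative entropy bound together with the second-order expansion $D_H^\epsilon(\rho^{\otimes n}\|\mathbb{1})=-nS+\sqrt{nV}\Phi^{-1}(\cdot)+O(\log n)$. That approach is simpler but weaker: it determines only the first-order rate and the $\sqrt{n}$ scaling, not necessarily the precise constant $\Phi^{-1}(\sqrt{1-\epsilon})$.

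Once the one-shot converse holds, the requirement $s_{d_M}\geq\sqrt{1-\epsilon}$ forces $d_M$ to exceed the number of top eigenvalues needed to capture mass $\sqrt{1-\epsilon}$. A lower-tail Berry--Esseen bound then gives $\log d_M\geq nS+\Phi^{-1}(\sqrt{1-\epsilon})\sqrt{nV}-O(\log n)$. Together with achievability this gives the stated expansion of $M(n,\epsilon,\rho)$.
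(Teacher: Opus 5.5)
\textbf{The converse is missing.} Your achievability half is fine, and $F_e\le s_{d_M}^2$ is the right one-shot target for the converse. It even holds exactly, so with truncation the optimum is exactly $s_{d_M}^2$. The gap is that nothing in the proposal proves this inequality, and it is where the whole content of the constant $\Phi^{-1}(\sqrt{1-\epsilon})$ lies. The paper does not prove the theorem itself; it imports it from Abdelhadi--Renes \cite{abdelhadi2020second}, and the nontrivial part of their result is exactly this converse, which they reach through state merging and partially smoothed min-entropies.

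None of your three routes delivers it:
\begin{itemize}
\item \emph{Cauchy--Schwarz.} Summed over $i,j$, your bound gives only $F_e\le\sum_j\tr(\rho D_jD_j^\dagger)=\tr(\rho\,\mathcal{D}(I_M))$. Since $\mathcal{D}(I_M)$ has trace $d_M$ but need not be bounded by $I_A$, this can be as large as $d_M\lambda_{\max}$ (take $\mathcal{D}$ to prepare a fixed pure state). Your per-$j$ refinement $\tr(\rho D_jD_j^\dagger)\le\|D_j\|_\infty^2 s_{d_M}$ does not help, because $\sum_j\|D_j\|_\infty^2$ can also reach $d_M$. So this route does not even recover the easy Schmidt-rank bound $F_e\le s_{d_M}$, let alone its square.
\item \emph{Datta--Leditzky.} Pointing to their analysis does not fill the hole, since their bounds are precisely the ones that failed to match at second order.
\item \emph{Hypothesis testing.} As you note yourself, this does not put the square root inside the quantile.
\end{itemize}
The easy bound $F_e\le s_{d_M}$ only gives a converse with $\Phi^{-1}(1-\epsilon)$, which does not meet your achievability.

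\textbf{How to close it.} Take the square root first, so that $F_e$ becomes a linear functional of a single operator. Read $\rho$ as $\rho_A^{\otimes n}$. By $\ell_2$ duality, $\sqrt{F_e}=\max_{\|\alpha\|_2\le 1}|\tr(\rho T_\alpha)|$ with $T_\alpha=\sum_{ij}\alpha_{ij}D_jC_i$. Two estimates then suffice.
\begin{itemize}
\item $\|T_\alpha\|_\infty\le 1$. For unit vectors $u,v$,
$|\langle u|T_\alpha|v\rangle|^2\le\sum_{ij}|\langle u|D_jC_i|v\rangle|^2=\langle u|\mathcal{D}\circ\mathcal{C}(|v\rangle\langle v|)|u\rangle\le 1$.
\item $\|T_\alpha\|_1\le d_M$. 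For $\|X\|_\infty\le 1$, $\tr(XT_\alpha)=\sum_{ij}\alpha_{ij}\tr(C_iXD_j)$ is a combination of traces of operators on $M$. Using $|\tr Z|^2\le d_M\|Z\|_2^2$ and $\sum_i C_i^\dagger C_i=I_A$,
$|\tr(XT_\alpha)|^2\le d_M\sum_{ij}\|C_iXD_j\|_2^2=d_M\sum_j\tr(D_j^\dagger X^\dagger X D_j)\le d_M\tr(I_M)=d_M^2$.
\end{itemize}
The dimension of $M$ enters twice in the second estimate, once per factor of $d_M$, and that is the source of the square.

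Von Neumann's trace inequality then gives $|\tr(\rho T_\alpha)|\le\sum_m\lambda_m^{\downarrow}\sigma_m(T_\alpha)\le s_{d_M}$, because the singular values of $T_\alpha$ lie in $[0,1]$ and sum to at most $d_M$. Hence $F_e\le s_{d_M}^2$ for every encoder--decoder pair through $M$. With this in hand, your lower-tail Berry--Esseen step gives the matching converse.
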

We use this along with lemma \ref{lemma:dimensionlowerbound} to obtain a lower bound on the entropy. 

\begin{theorem}\label{thm:entropylowerboundapproximate}
    Let $U_{AB}$ be $\lambda$-controllably entangled. 
    Suppose that an NLQC protocol implements $U_{AB}$ to within diamond norm distance $\gamma$, using a pure resource state $\ket{\psi}_{LR}$. Then, for small enough $\gamma$,
    \begin{align}
        S(R)_\psi\geq \lambda- P(\gamma)
    \end{align}
    where $P(\gamma)=2\sqrt{2}n_Q\gamma^{1/8}$.
\end{theorem}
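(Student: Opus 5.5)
The plan is to repeat the argument of lemma \ref{lemma:exactcaseentropylowerbound}, but with the finite block length version of Schumacher compression (theorem \ref{thm:finite_n_schumacher}) and with $m$ chosen as a function of $\gamma$ rather than sent to infinity. Fix a block length $m$ and a compression error $\epsilon_c$. By theorem \ref{thm:finite_n_schumacher} there are maps $\mathcal{C}_{R^m\rightarrow M}$, $\mathcal{D}_{M\rightarrow R^m}$ achieving entanglement fidelity at least $1-\epsilon_c$ on $\psi_R^{\otimes m}$, with $n_M\leq mS(R)+\Phi^{-1}(\sqrt{1-\epsilon_c})\sqrt{mV(R)}+O(\log m)$. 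Since $\ket{\psi}_{LR}^{\otimes m}$ is a purification of $\psi_R^{\otimes m}$, this fidelity bound together with Fuchs--Van de Graaf gives $\Vert \psi^{\otimes m}-\mathcal{D}\circ\mathcal{C}(\psi^{\otimes m})\Vert_1\leq 2\sqrt{\epsilon_c}$.

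Next build an NLQC protocol for $U^{\otimes m}$ with resource $\mathcal{C}(\psi^{\otimes m})_{L^mM}$. Bob first applies $\mathcal{D}$, and then the two parties run $m$ parallel copies of the given protocol. A triangle inequality, with a hybrid argument over the $m$ copies using subadditivity of the diamond norm under tensor products, bounds the distance to $U^{\otimes m}$ by $\epsilon:=2\sqrt{\epsilon_c}+m\gamma$. Lemma \ref{lemma:dimensionlowerbound} then gives $n_M\geq m(\lambda-g_1(\epsilon))-g_2(\epsilon)$. Comparing with the upper bound on $n_M$ and dividing by $m$ yields
\begin{align}
S(R)\geq \lambda - g_1(\epsilon) - \frac{g_2(\epsilon)}{m} - \frac{\Phi^{-1}(\sqrt{1-\epsilon_c})\sqrt{V(R)}}{\sqrt m} - O\!\left(\frac{\log m}{m}\right).
\end{align}

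The main obstacle is the choice of parameters. The dominant error is $g_1(\epsilon)=n_Q(\eta_\epsilon+\eta_{2\sqrt\epsilon})$. Since $\eta_{2\sqrt\epsilon}\approx 2\sqrt{2}\,\epsilon^{1/4}$, this term is of order $n_Q\epsilon^{1/4}$, and it is small only if $m\gamma\ll 1$. On the other hand $m$ must be large so that the $1/\sqrt m$ and $\log m/m$ terms vanish. I would take $m\sim\gamma^{-1/2}$ and $\epsilon_c$ a fixed small power of $\gamma$, for example $\epsilon_c=\gamma$. Then $\epsilon\approx m\gamma\sim\gamma^{1/2}$, so $g_1\sim 2\sqrt2\, n_Q\gamma^{1/8}$. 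The remaining terms, namely $g_2(\epsilon)/m$, the second-order term $\sim\gamma^{1/4}\Phi^{-1}(\cdot)\sqrt{V}$, and the $\log m/m$ term, vanish faster than $\gamma^{1/8}$ as $\gamma\to0$. This is where the qualifier ``for small enough $\gamma$'' enters.

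Two technical points need care. First, $\Phi^{-1}(\sqrt{1-\epsilon_c})$ diverges only logarithmically as $\epsilon_c\to0$, so with $\epsilon_c$ polynomial in $\gamma$ it only contributes a $\sqrt{\log(1/\gamma)}$ factor, which the $m^{-1/2}=\gamma^{1/4}$ suppression absorbs. Second, the $O(\log m)$ constant in theorem \ref{thm:finite_n_schumacher} depends on $\psi_R$. To keep the final bound uniform, I would control $V(R)$ by a dimension-dependent bound, and the remaining constants by the requirement that $\gamma$ be small enough. Once these are absorbed, the result is $S(R)\geq\lambda-2\sqrt2\,n_Q\gamma^{1/8}$.
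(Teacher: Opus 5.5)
Your route is the paper's: finite-blocklength Schumacher compression of $\psi^{\otimes m}$, Bob decompressing before running $m$ parallel copies, Lemma~\ref{lemma:dimensionlowerbound} applied to the resulting protocol for $U^{\otimes m}$, and $m\sim\gamma^{-1/2}$. There is, however, a bookkeeping slip in how you recover the constant $2\sqrt{2}$.

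You convert entanglement fidelity $1-\epsilon_c$ into a trace-norm error $2\sqrt{\epsilon_c}$. This is correct, and more careful than the paper, which adds the fidelity error directly to $m\gamma$. But with your choice $\epsilon_c=\gamma$ and $m=\gamma^{-1/2}$, the total error is $\epsilon=2\sqrt{\gamma}+\sqrt{\gamma}=3\sqrt{\gamma}$, not $\approx m\gamma$. The leading term is then $n_Q\eta_{2\sqrt{\epsilon}}\approx 2\sqrt{2}\,3^{1/4}n_Q\gamma^{1/8}$, which exceeds $P(\gamma)$. Take $\epsilon_c=\gamma^2$ instead; any $\epsilon_c=o(\gamma)$ works, and $\Phi^{-1}(\sqrt{1-\epsilon_c})$ still grows only like $\sqrt{\log(1/\gamma)}$. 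Then $2\sqrt{\epsilon_c}=2\gamma$ is negligible against $m\gamma$.

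Second, ``the remaining terms vanish faster than $\gamma^{1/8}$'' does not by itself let you drop them when the leading term already carries exactly the coefficient $2\sqrt{2}$. You would get $\lambda-2\sqrt{2}n_Q\gamma^{1/8}-o(\gamma^{1/8})$, which is strictly weaker than the claim; the paper's final step has the same looseness. To get $P(\gamma)$ exactly, build in slack, for example $m=\lfloor c\,\gamma^{-1/2}\rfloor$ with a fixed $c<1$. The leading term becomes $2\sqrt{2}\,c^{1/4}n_Q\gamma^{1/8}$, and the margin $2\sqrt{2}(1-c^{1/4})n_Q\gamma^{1/8}$ absorbs all the $\tilde{O}(\gamma^{1/4})$ terms once $\gamma$ is small. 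With these two adjustments your argument proves the theorem as stated.
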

\begin{proof}
We consider an implementation of $U^{\otimes m}$, where we will choose $m$ later. 
Our implementation uses as a resource state a compressed version of  $\ket{\psi}_{LR}^{\otimes m}$, that is we use 
\begin{align}
    \Psi_{L^mM} = \mathcal{I}\otimes \mathcal{E}_{R^m\rightarrow M}(\ketbra{\psi}{\psi}^{\otimes m})
\end{align}
where $\mathcal{E}_{R^m\rightarrow M}$ is an optimal compression channel. 
The protocol proceeds by first having Bob decompress $M$ into $R^m$, and then running $m$ parallel implementations of $U$ as before. 
We use an $\epsilon$-correct compression protocol where we choose $\epsilon$ later. 
Since the compression protocol is $\epsilon$-correct, and each $U$ implementation is $\gamma$-correct, by the properties of the diamond norm the implementation of $U^{\otimes m}$ using the compressed resource state will be $\epsilon+\gamma\,m$ correct. 

Now we make use of lemma \ref{lemma:dimensionlowerbound}, which tells us that
\begin{align}
    n_M \geq m(\lambda-g_1(\epsilon+\gamma m)) - g_2(\epsilon+\gamma m).
\end{align}
Now use theorem \ref{thm:finite_n_schumacher} as an upper bound on $n_M$, and using that $V(R)_\rho \leq n_R^2$, we have
\begin{align}
    S(R)_\psi \geq \lambda - \Phi^{-1}(\sqrt{1-\epsilon})\frac{n_R}{\sqrt{m}} - g_1(\epsilon + \gamma m) - \frac{g_2(\epsilon+\gamma m)}{m} - O\left(\frac{\log m}{m}\right).
\end{align}
For intuition, notice that if we take $\gamma=0$ we can maximize the lower bound by sending $\epsilon\rightarrow0,m\rightarrow \infty$, in which case we recover $S(R)_\rho\geq \lambda$. 
At non-zero $\gamma$ however, sending $m\rightarrow \infty$ would remove the lower bound, which only applies when the total error $\epsilon+ \gamma m\in[0,1]$. 
To recover a good lower bound, we need to choose $\epsilon, m$ in a way that depends on $\gamma$ such that the lower bound approaches $\lambda$ as $\gamma\rightarrow 0$.
We will achieve this with a simple choice by taking
\begin{align}
    \epsilon &= \gamma, \qquad m =\frac{1}{\sqrt{\gamma}}.
\end{align}
Inserting this above leads to the lower bound
\begin{align}
    S(R)_\psi \geq \lambda - \Phi^{-1}(\sqrt{1-\gamma}){n_R}{\sqrt{\gamma}} - g_1(\gamma + \sqrt{\gamma}) - \frac{g_2(\gamma+\sqrt{\gamma})}{m} - O\left(\sqrt{\gamma}\log \gamma\right).\nonumber 
\end{align}
We can see that as $\gamma\rightarrow 0$ this approaches the lower bound obtained in the exact case, so this bound is equal to that one plus terms that go to zero as $\gamma\rightarrow 0$. 
To obtain the error terms, we expand in a series around $\gamma=0$, obtaining
\begin{align}
    S(R)_\psi \geq \lambda -2\sqrt{2}n_{Q}\gamma^{1/8}- \tilde{O}(n_R\gamma^{1/4})- \tilde{O}(n_{Q}\gamma^{1/4}) \,\,\,\text{as}\,\,\, \gamma\rightarrow 0
\end{align}
where the $\tilde{O}$ notation hides logarithmic factors.
For $\gamma$ small enough, concretely $\gamma$ such that $\gamma$ is much smaller than $\min\{1/n_R^4,1/n_{Q}^{4}\}$, we obtain the lower bound
\begin{align}
    S(R)_\psi \geq \lambda - 2\sqrt{2}n_{Q}\gamma^{1/8}
\end{align}
 as claimed. 
\end{proof}

\subsubsection{Entanglement lower bound}\label{sec:entanglementbound}

We've given lower bounds on the entropy of one end of the resource system for any resource state that allows a unitary $U_{AB}$ with controllable entanglement to be implemented as a NLQC. 
In the case where the resource is pure, this provides a lower bound on entanglement in the resource. 
In the mixed case however, this is no longer true. 
Thus some further work is needed to give an entanglement lower bound in the mixed case. 

\vspace{0.2cm}
\noindent \textbf{Theorem \ref{thm:mainlowerbound}} \emph{Suppose that an NLQC protocol using resource state $\Psi_{LR}$ implements a unitary $U_{AB}$ to within diamond norm distance $\gamma$, and where $U_{AB}$ has $\lambda$-controllable entanglement.
Then for small enough $\gamma$, we have that
\begin{align}
    E_f(L:R)_\Psi \geq \lambda -P(d_{AB}\sqrt{2\gamma})
\end{align}
where $P(x)=2\sqrt{2}n_{Q}x^{1/8}$. }

\begin{proof}
Consider any decomposition of $\Psi$ into a convex sum of pure states,
\begin{align}
    \Psi_{LR}=\sum_x p_x \ketbra{\psi_x}{\psi_x}_{LR}.
\end{align}
Then, the protocol implemented can be seen as a probabilistic mixture of protocols which take in the (pure) resource state $\ket{\psi_x}_{LR}$. 
Let $\mathcal{P}$ denote the channel implemented by the protocol given resource $\Psi_{LR}$, and $\mathcal{P}_x$ the channel implemented when given resource $\psi_x$, so that $\mathcal{P}=\sum_xp_x\mathcal{P}_x$.
Let $\mathcal{U}(\cdot)=U(\cdot)U^\dagger$ be the channel formed by acting with unitary $U$. 
Then using $\Vert \mathcal{P}-\mathcal{U}\Vert_\diamond\leq \gamma$ and lemma \ref{lemma:Uextreme}, we have
\begin{align}
    \sum_xp_x \left\Vert \mathcal{P}_x-\mathcal{U}\right\Vert_\diamond\leq d_{AB}\sqrt{2\gamma}.
\end{align}
Define $\left\Vert \mathcal{P}_x-\mathcal{U}\right\Vert_\diamond=\gamma_x$, so that the above reads $\sum_x p_x\gamma_x\leq d_{AB}\sqrt{2\gamma}$. 
Then our lower bound given by theorem \ref{thm:entropylowerboundapproximate} applied to each $\mathcal{P}_x$ separately tells us that
\begin{align}
    S(R)_{\psi_x} \geq \lambda -P(\gamma_x).
\end{align}
But now consider the entanglement of formation for $\Psi_{LR}$, 
\begin{align}
    E_f(L:R)_\Psi &= \min_{\{p_i,\ket{\psi_i}\}} \sum_i p_i S(R)_{\psi_i}, \nonumber \\
    &\geq\lambda-\min_{\{p_i,\ket{\psi_i}\}} \sum_i p_i P(\gamma_i) \nonumber \\
    &\geq \lambda-\min_{\{p_i,\ket{\psi_i}\}} P\left(\sum_ip_i\gamma_i\right) \nonumber \\
    &\geq \lambda-P(d_{AB}\sqrt{2\gamma}).
\end{align}
where the last inequality used concavity of $P(\cdot)$. 
\end{proof}

A convenient property of our lower bound technique is that parallel repetition holds, as follows from the following remark. 
\begin{remark}
    \textbf{(Parallel repetition)}
    Suppose that a unitary $U$ has $\lambda$-controllable entanglement. 
    Then $U^{\otimes n}$ has $n\lambda $-controllable entanglement.
\end{remark} 
This is straightforward to verify from the definition of the controllable entanglement. 

\subsection{Evaluating the lower bound in simple cases}

Next, we begin exploring the value of the controllable entanglement for some simple gates. 
We do so numerically. 
In particular, we compute $\lambda$ for various explicit choices of gate. 
While for $\CNOT$ and gates related by local unitaries it is straightforward to guess the optimal choice of $\ket{\phi^1}$ and a suitable choice of $\ket{\phi^2}$, for general gates we perform a numerical optimization procedure to find these values. 
A program is available online\footnote{\url{https://drive.google.com/file/d/1fVHwtfde0H9wJ3QOVrTUSHkr2dhpink1/view?usp=drive_link}} which the reader can download and run to study lower bounds on any chosen two qubit gate. 

We describe our numerical approach briefly here and report our results in table \ref{table:lowerbounds}. 
To obtain our lower bound, we need to compute $\lambda$ and find an input for which $\rho^2_{QA}$ is separable, if it exists. 
Recall that $\lambda$ is equal to the entanglement of formation in system $Q\!:\!A$ with input $B$ chosen to be $\ket{\phi^1}$. 
The entanglement of formation has a closed form expression for two qubit systems \cite{hill1997entanglement} which we make use of.
Because $E_f$ is convex in its input state, we are assured that the maximizing choice of input on $B$ is a pure state. 

To find a suitable value of $\phi^2_B$ is somewhat more challenging.
To do this we define
\begin{align}
    \lambda'=\min_{\phi_2^B}E_R(Q:A)_{\phi_B^2}
\end{align}
where
\begin{align}
    E_R(Q:A)_\rho = \min_{\sigma\in SEP} D(\rho_{QA}\Vert\sigma_{QA}).
\end{align}
This is a faithful measure of entanglement, so if when we find that $\lambda'=0$, we know $\rho^2_{QA}$ is separable as needed. 
To avoid issues with numerical errors, in practice we check by hand candidate values of $\phi^2_B$ are in fact separable. 
We use the Ket package in Julia\footnote{\url{https://juliapackages.com/p/ket}} to compute $E_R$ numerically for our states of interest. 
We restrict to two qubit states, where this optimization is easy.
We then search over the space of single qubit input states to find a locally minimal choice of state $\ket{\phi^1}$, which again is pure because of convexity of $E_R$. 

We find non-trivial lower bounds for a number of common gates. 
In addition to the CNOT gate, we find lower bounds for the Berkeley B gate and for the $XX$ interaction with a rotation by (for instance) $\pi/4$.
See appendix \ref{sec:gates} for matrix expressions for these gates.
Notice $RXX(\pi/2)=\exp(-i\frac{\pi}{4}X\otimes X)$ and $\CNOT$ obtain the same numerical lower bound. 
This leaves open the possibility that these two gates are equivalent up to local operations, but we can confirm this is not the case by observing that $\CNOT\ket{\Psi^+}=\ket{0}\ket{+}$ is product, while $RXX\ket{\Psi^+}\propto \ket{\Psi^+}$ is entangled. 

Finally, we comment that our lower bound fails to be non-trivial (to give a positive lower bound) for many choices of gate. 
This occurs, for example, for the SWAP gate where we find a lower bound of 0. 
In this and related examples the failure of our lower bound is expected, since SWAP can be implemented without entanglement and using only quantum communication.
On the other hand, for many gates including Controlled-$T$, Controlled-$S$, and random two qubit unitaries our technique does not give a lower bound, but from the controllable correlation technique we can see that these gates do require entanglement, so the technique fails in these cases. 

\section{Discussion}\label{sec:discussion}

In this work we introduced the controllable correlation and controllable entanglement, and gave lower bounds on NLQC in terms of these properties. 
We showed that many simple two qubit gates can be lower bounded by these techniques. 

Computing the controllable correlation numerically allows us to explore the entanglement requirements of many gates. 
Doing so, we have made some observations that so far do not have analytical explanations. 
For instance, we computed the controllable correlation lower bound for 100,000 samples from the Haar distribution on two qubit unitaries and found a positive lower bound for every sample.
For gates that are known to not require entanglement, like the SWAP gate or product unitaries, the bound of course returns 0 correctly, but the set of such gates seems to have small (perhaps zero) measure. 
It would be interesting to extend our numerics to larger numbers of qubits, and, if the observation persists, to look for an analytical proof that unitaries with zero entanglement cost are rare. 

Another natural question is whether, when a unitary has non-zero entanglement cost, this always leads to a non-zero controllable correlation for some choice of state $P_{QA}$.
If so, this would make the controllable correlation a ``faithful'' measure of entanglement cost, analogous to a faithful measure of entanglement in quantum states. 
A related question is to begin with the assumption that the controllable correlation is $0$ for all states $P_{QA}$, and try to extract structural properties of the unitary. 
For instance, does this imply the unitary is locally equivalent to either SWAP or identity? 
Or are there other unitaries with this property?

We showed that the CNOT gate requires a resource system with $E_f$ of 1, the same entanglement of formation as one EPR pair. 
Another interesting problem is to ask if whenever the CNOT gate can be implemented exactly, the resource state is of the form $\Psi^+_{LR}\otimes \Psi_{L'R'}$ up to local unitaries on $LL'$ and $RR'$. 
This would be analogous to the self-testing property of certain non-local games.

Our techniques so far assume that the NLQC implements a unitary operation. 
This requirement is necessary, and neither the controllable correlation nor the controllable entanglement provide lower bounds without placing this requirement. 
This is clear, since if it applied to quantum channels we could obtain a lower bound on entanglement cost of $\mathcal{N}_{AB\rightarrow A}(\cdot)=\tr_B(U_{AB}\cdot U^\dagger_{AB})$, which has the same values of $\lambda_1$ and $\lambda_2$ as $U_{AB}$ does. 
However this channel can always be implemented with zero entanglement (by sending both inputs to the left), so this is a contradiction. 
An important open problem then is to find new quantities, or modifications of the controllable correlation / entanglement, which provide lower bounds on the entanglement cost of quantum channels. 

\vspace{0.2cm}
\noindent \textbf{Acknowledgements:} We thank Eric Culf who was involved in an earlier version of this project, and Eric Chitambar for noticing an important error in an early version of this paper. 
We thank the anonymous referees of TQC 2026 for further corrections.
Research at the Perimeter Institute is supported by the Government of Canada through the Department of Innovation, Science and Industry Canada and by the Province of Ontario through the Ministry of Colleges and Universities. RC is partially supported by Canada's NSERC.

\appendix

\section{List of quantum gates}\label{sec:gates}

In this appendix we give matrix expressions for the two qubit unitaries appearing in table~\ref{table:lowerbounds}. 

\begin{align}
\CNOT &= \begin{pmatrix}
        1 & 0 & 0 & 0\\
        0 & 1 & 0 & 0\\
        0 & 0 & 0 & 1\\
        0 & 0 & 1 & 0
    \end{pmatrix} 
\end{align}
\begin{align}
\text{DCNOT} &=\begin{pmatrix}
        1 & 0 & 0 & 0\\
        0 & 0 & 1 & 0\\
        0 & 0 & 0 & 1\\
        0 & 1 & 0 & 0
    \end{pmatrix}
\end{align}
\begin{align}
\text{B} &= \begin{pmatrix}
        \cos(\pi/8) & 0 & 0 & i\sin(\pi/8)\\
        0 & \cos(3\pi/8) & i \sin(3\pi/8) & 0 \\
        0 & i\sin(3\pi/8) & \cos(3\pi/8) & 0 \\
        i\sin(\pi/8) & 0 & 0 & \cos(\pi/8)
    \end{pmatrix}
\end{align}
\begin{align}
\text{RXX}(\pi/2) &=\exp\left(-i\frac{\pi}{4}X\otimes X\right) = \begin{pmatrix}
        \cos(\pi/8) & 0 & 0 & -i\sin(\pi/8)\\
        0 & \cos(\pi/8) & -i\sin(\pi/8) & 0 \\
        0 & -i\sin(\pi/8) & \cos(\pi/8) & 0 \\
        -i\sin(\pi/8) & 0 & 0 & \cos(\pi/8).
    \end{pmatrix}
\end{align}
\begin{align}
i\SWAP&=\begin{pmatrix}
        1 & 0 & 0 & 0\\
        0 & 0 & i & 0\\
        0 & i & 0 & 0\\
        0 & 0 & 0 & 1
    \end{pmatrix}
\end{align}
\begin{align}
\sqrt{\SWAP}&=\begin{pmatrix}
        1 & 0 & 0 & 0\\
        0 & \frac{1}{2}(1+i) & \frac{1}{2}(1-i) & 0\\
        0 & \frac{1}{2}(1-i) & \frac{1}{2}(1+i) & 0\\
        0 & 0 & 0 & 1
    \end{pmatrix}
\end{align}
\begin{align}
\text{Sycamore}&=\begin{pmatrix}
        1 & 0 & 0 & 0\\
        0 & 0 &-i & 0\\
        0 & -i & 0 & 0\\
        0 & 0 & 0 & e^{-i\pi/6}
    \end{pmatrix}
\end{align}
\begin{align}
\text{Magic}&=\frac{1}{\sqrt{2}} \begin{pmatrix}
        1 & i & 0 & 0\\
        0 & 0 & i & 1\\
        0 & 0 & i & -1\\
        1 & -i & 0 & 0
    \end{pmatrix}
\end{align}
\begin{align}
\text{Dagwood Bumstead}&= \begin{pmatrix}
        1 & 0 & 0 & 0\\
        0 & \cos(3\pi/8) & -i\sin(3\pi/8) & 0 \\
        0 & -i\sin(3\pi/8) & \cos(3\pi/8) & 0 \\
        0 & 0 & 0 & 1
    \end{pmatrix}
\end{align}
\begin{align}
\text{CS}&= \begin{pmatrix}
        1 & 0 & 0 & 0\\
        0 & 1 & 0 & 0 \\
        0 & 0 & 1 & 0 \\
        0 & 0 & 0 & i
    \end{pmatrix}
\end{align}
\begin{align}
\text{CT}&= \begin{pmatrix}
        1 & 0 & 0 & 0\\
        0 & 1 & 0 & 0 \\
        0 & 0 & 1 & 0 \\
        0 & 0 & 0 & e^{i\pi/4}
    \end{pmatrix}
\end{align}
\begin{align}
\text{Echoed cross resonance}&=\frac{1}{\sqrt{2}} \begin{pmatrix}
        0 & 0 & 1 & i\\
        0 & 0 & i & 1 \\
        1 & -i & 0 & 0 \\
        -i & 1 & 0 & 0
    \end{pmatrix}
\end{align}
\begin{align}
    CSX = \begin{pmatrix}
        1 & 0 & 0 & 0\\
        0 & 1 & 0 & 0 \\
        0 & 0 & e^{i\pi/4}/\sqrt{2} & e^{-i\pi/4}/\sqrt{2} \\
        0 & 0 & e^{-i\pi/4}/\sqrt{2} & e^{i\pi/4}/\sqrt{2}
    \end{pmatrix}
\end{align}

\bibliographystyle{unsrtnat}
\bibliography{biblio}

\end{document}